\newif\ifAC@uppercase@first%
\def\Aclp#1{\AC@uppercase@firsttrue\aclp{#1}\AC@uppercase@firstfalse}%
\def\AC@aclp#1{%
  \ifcsname fn@#1@PL\endcsname%
    \ifAC@uppercase@first%
      \expandafter\expandafter\expandafter\MakeUppercase\csname fn@#1@PL\endcsname%
    \else%
      \csname fn@#1@PL\endcsname%
    \fi%
  \else%
    \AC@acl{#1}s%
  \fi%
}%
\def\Acp#1{\AC@uppercase@firsttrue\acp{#1}\AC@uppercase@firstfalse}%
\def\AC@acp#1{%
  \ifcsname fn@#1@PL\endcsname%
    \ifAC@uppercase@first%
      \expandafter\expandafter\expandafter\MakeUppercase\csname fn@#1@PL\endcsname%
    \else%
      \csname fn@#1@PL\endcsname%
    \fi%
  \else%
    \AC@ac{#1}s%
  \fi%
}%
\def\Acfp#1{\AC@uppercase@firsttrue\acfp{#1}\AC@uppercase@firstfalse}%
\def\AC@acfp#1{%
  \ifcsname fn@#1@PL\endcsname%
    \ifAC@uppercase@first%
      \expandafter\expandafter\expandafter\MakeUppercase\csname fn@#1@PL\endcsname%
    \else%
      \csname fn@#1@PL\endcsname%
    \fi%
  \else%
    \AC@acf{#1}s%
  \fi%
}%
\def\Acsp#1{\AC@uppercase@firsttrue\acsp{#1}\AC@uppercase@firstfalse}%
\def\AC@acsp#1{%
  \ifcsname fn@#1@PL\endcsname%
    \ifAC@uppercase@first%
      \expandafter\expandafter\expandafter\MakeUppercase\csname fn@#1@PL\endcsname%
    \else%
      \csname fn@#1@PL\endcsname%
    \fi%
  \else%
    \AC@acs{#1}s%
  \fi%
}%
\edef\AC@uppercase@write{\string\ifAC@uppercase@first\string\expandafter\string\MakeUppercase\string\fi\space}%
\def\AC@acrodef#1[#2]#3{%
  \@bsphack%
  \protected@write\@auxout{}{%
    \string\newacro{#1}[#2]{\AC@uppercase@write #3}%
  }\@esphack%
}%
\def\Acl#1{\AC@uppercase@firsttrue\acl{#1}\AC@uppercase@firstfalse}
\def\Acf#1{\AC@uppercase@firsttrue\acf{#1}\AC@uppercase@firstfalse}
\def\Acfi#1{\AC@uppercase@firsttrue\acfi{#1}\AC@uppercase@firstfalse}
\def\Ac#1{\AC@uppercase@firsttrue\ac{#1}\AC@uppercase@firstfalse}
\def\Acs#1{\AC@uppercase@firsttrue\acs{#1}\AC@uppercase@firstfalse}
\def\AC@@acro#1[#2]#3{%
  \ifAC@nolist%
  \else%
  \ifAC@printonlyused%
    \expandafter\ifx\csname acused@#1\endcsname\AC@used%
       \item[\protect\AC@hypertarget{#1}{\acsfont{#2}}] #3%
          \ifAC@withpage%
            \expandafter\ifx\csname r@acro:#1\endcsname\relax%
               \PackageInfo{acronym}{%
                 Acronym #1 used in text but not spelled out in
                 full in text}%
            \else%
               \dotfill\pageref{acro:#1}%
            \fi\\%
          \fi%
    \fi%
 \else%
    \item[\protect\AC@hypertarget{#1}{\acsfont{#2}}] #3%
 \fi%
 \fi%
 \begingroup
    \def\acroextra##1{}%
    \@bsphack
    \protected@write\@auxout{}%
       {\string\newacro{#1}[\string\AC@hyperlink{#1}{#2}]{\AC@uppercase@write #3}}%
    \@esphack
  \endgroup}
\definecolor{ali}{RGB}{0, 150, 0}
\definecolor{mdr}{RGB}{255, 127, 0}
\newcommand{\ali}[1]{\textcolor{ali}{\bf Ali: #1}}
\newcommand{\mExpect}[2]{\ensuremath{\mathbb{E}_{#1}\left[#2\right]}}
\acrodef{MBC}{mixture-based correction}
\newcommand{\OurMethod}{\ac{MBC}}
\acrodef{IR}{information retrieval}
\acrodef{EM}{Expectation-Maximization}
\acrodef{rbEM}{regression-based EM}
\acrodef{LTR}{learning to rank}
\acrodef{SERP}{Search Engine Results Page}
\acrodef{CLTR}{counterfactual learning to rank}
\acrodef{OLTR}{Online Learning to Rank}
\acrodef{IPS}{inverse propensity scoring}
\acrodef{PBM}{Position-Based Model}
\acrodef{CM}{Cascade Model}
\acrodef{ARP}{Average Relevance Position}
\acrodef{DCG}{Discounted Cumulative Gain}
\acrodef{DLA}{Dual Learning Algorithm}
\acrodef{MNAR}{Missing-Not-At-Random}
\acrodef{mPMM}{modified Poisson Mixture Model}
\acrodef{DCM}{Dependent Click Model}
\acrodef{UBM}{User Browsing Model}
\acrodef{CCM}{Click Chain Model}
\acrodef{PMM}{Poisson Mixture Model}
\acrodef{GMM}{Gaussian Mixture Model}
\acrodef{BMM}{Binomial Mixture Model}
\acrodef{CTR}{click-through rate}
\acrodef{AC}{affine correction}
\acrodef{i.i.d.}{independent and identically distributed}
\acrodef{CLT}{Central Limit Theorem}
\acrodef{NDCG}{Normalized Discounted Cumulative Gain}
\acrodef{MAP}{Mean Average Precision}
\theoremstyle{definition}
\theoremstyle{definition}
\newtheorem{remark}{Remark}[]
\renewcommand\paragraph{\@startsection{paragraph}{4}{\parindent}%
  {-.1\baselineskip \@plus -1\p@ \@minus -.1\p@}%
  {-2\p@}%
  {\ACM@NRadjust{\@parfont\@adddotafter}}}
\author{Ali Vardasbi}
\affiliation{%
  \institution{University of Amsterdam}
  \city{Amsterdam}
  \country{The Netherlands}
}
\email{a.vardasbi@uva.nl}
\author{Maarten de Rijke}
\affiliation{%
  \institution{University of Amsterdam}
  \city{Amsterdam}
  \country{The Netherlands}
}
\email{m.derijke@uva.nl}
\author{Ilya Markov}
\affiliation{%
  \institution{University of Amsterdam}
  \city{Amsterdam}
  \country{The Netherlands}
}  
\email{i.markov@uva.nl}
\title[Mixture-Based Correction for Position and Trust Bias]{Mixture-Based Correction for Position and Trust Bias in Counterfactual Learning to Rank}
\begin{document}

\begin{abstract}
% Problem (CLTR)
In \ac{CLTR} user interactions are used as a source of supervision.
Since user interactions come with bias, an important focus of research in this field lies in developing methods to correct for the bias of interactions.
\Ac{IPS} is a popular method suitable for correcting position bias.
\Ac{AC} is a generalization of \ac{IPS} that corrects for position bias and trust bias.
\Ac{IPS} and \ac{AC} provably remove bias, conditioned on an accurate estimation of the bias parameters.
Estimating the bias parameters, in turn, requires an accurate estimation of the relevance probabilities.
This cyclic dependency introduces practical limitations in terms of sensitivity, convergence and efficiency. 

We propose a new correction method for position and trust bias in \ac{CLTR} in which, unlike the existing methods, the correction does not rely on relevance estimation.
Our proposed method, \OurMethod{}, is based on the assumption that the distribution of the \aclp{CTR} over the items being ranked is a mixture of two distributions: the distribution of \aclp{CTR} for relevant items and the distribution of \aclp{CTR} for non-relevant items.
We prove that our method is unbiased.
The validity of our proof is not conditioned on accurate bias parameter estimation.
Our experiments show that~\OurMethod{}, when used in different bias settings and accompanied by different \acl{LTR} algorithms, outperforms \ac{AC}, the state-of-the-art method for correcting position and trust bias, in some settings, while performing on par in other settings.
Furthermore,~\OurMethod{} is orders of magnitude more efficient than \ac{AC} in terms of the training time.
%\ali{Finally, we conduct experiments to show the robustness of~\OurMethod{} to click model mismatch, i.e. when the assumed click model for correction deviates from the actual model that the clicks adhere to.}
%\im{I am not sure this result is important enough to mention it in the abstract. Moreover, it needs to be explained properly, e.g., what is a "click model mismatch", what is "the assumed click model", what is "the actual model that the clicks adhere to"?}
\end{abstract}

\if0
% CCS concepts
\begin{CCSXML}
	<ccs2012>
	<concept>
	<concept_id>10002951.10003317.10003338.10003343</concept_id>
	<concept_desc>Information systems~Learning to rank</concept_desc>
	<concept_significance>500</concept_significance>
	</concept>
	</ccs2012>
\end{CCSXML}

\ccsdesc[500]{Information systems~Learning to rank}
\fi

\keywords{Unbiased learning to rank; Mixture model; Position bias; Trust bias}

\maketitle

\acresetall

% !TEX root = ../2021-clustering-cltr.tex

\section{Introduction}
\label{sec:intro} 
% CLTR and bias
\Ac{LTR} is the practice of using supervision to train a ranking function.
Traditional \ac{LTR} methods use explicit relevance labels produced by human annotators~\citep{liu2009learning}.
In contrast, \ac{CLTR} uses historical interactions, such as clicks, as labels.
Unlike costly manual labels, clicks are available in large amounts for almost no additional cost.
The downside of using clicks as relevance labels, however, is \emph{bias}.
Clicks suffer from different types of bias, such as position bias, selection bias, trust bias, etc.~\citep{joachims2005accurately}
As a result of bias, the probability of a click is not the same as the probability of relevance.
Thus, in order to use clicks as relevance labels, we should first \emph{correct} for the bias~\citep{wang2016learning,joachims2017unbiased}.

A number of techniques have been proposed to debias clicks and to estimate the probability of relevance based on the probability of clicks.
A well-known method is \acfi{IPS}~\citep{joachims2017unbiased,wang2016learning}, which corrects for the position bias in clicks.
\Ac{IPS} relies on the \emph{examination hypothesis}, i.e., an item is clicked if it is examined and perceived to be relevant by a user.
As the name suggests, in \ac{IPS} clicks are re-weighed by the inverse of the examination probability, a.k.a.\ propensity.
\Ac{IPS} is proved to be unbiased when the clicks suffer from position bias~\citep{joachims2017unbiased}.
\Acfi{AC} generalizes \ac{IPS} to also correct for trust bias~\citep{vardasbi2020when}.
\Ac{AC} has been proved to be unbiased when the clicks suffer from both position and trust bias.
The proofs of the unbiasedness of \ac{IPS} and \ac{AC} depend on knowledge of the bias parameters.
Accurately estimating the bias parameters, in turn, depends on obtaining accurate relevance estimations, which is as hard as the \ac{LTR} problem itself.
In the literature, this cyclic dependency is solved by a \ac{rbEM} algorithm that simultaneously learns the ranker as well as the bias parameters~\citep{wang2018position,agarwal2019addressing,vardasbi2020when}.
We argue that integration of a regression function into the standard \ac{EM} leads to a number of practical limitations in terms of
\begin{enumerate*}[label=(\arabic*)]
	\item sensitivity to the regression function,
	\item a lack of guarantees that \ac{EM} converges to a zero gradient, and 
	\item low efficiency of the algorithm.
\end{enumerate*}

We break the curse of cyclic dependency 
% and theoretically unclear use of a regression function 
by proposing a novel debiasing method, \acfi{MBC}.
Inspired by the idea of score distributions~\citep{arampatzis-2009-score}, we assume that the probability of seeing a specific \ac{CTR} for an item at a position in a ranking is a mixture of \ac{CTR} probabilities for relevant and non-relevant items appearing on that position.
More specifically, we assume that an item is clicked if, and only if, one of the following two disjoint events occurs:
\begin{enumerate*}
	\item a user examines the item and the item is actually relevant (i.e., this is a click on a relevant item), or
	\item the user examines the item, the item is not relevant, but the user clicks on it anyway due a certain bias, e.g., trust in the search engine~\citep{agarwal2019addressing} or visual attractiveness of the item~\cite{chen2012beyond} (i.e., this is a click on a non-relevant item).
\end{enumerate*}
For each position in a ranking, \OurMethod{} estimates the distribution of \acp{CTR} for these two events and calculates the full distribution of \acp{CTR} on that position as their mixture.
Then, the probability of relevance for a given item is calculated by Bayes' rule, as the posterior probability of relevance, given the observed \ac{CTR} over that item.
Finally, the estimated probabilities of relevance are used as labels in a standard \ac{LTR} algorithm.

We prove that~\OurMethod{} gives an unbiased estimator of the probability of relevance, without any prior knowledge of the bias parameter values.
This is a step forward, as \ac{IPS} and \ac{AC} \emph{do} rely on prior knowledge of the bias parameter values to be unbiased.
Below, we show theoretically how inaccurate bias parameters prevent \ac{AC} from completely removing bias.

We confirm our theoretical advances with a set of semi-syn\-thetic experiments. 
We show that the ranking performance of different \ac{LTR} algorithms, trained on the relevance estimates of~\OurMethod{}, always converges to the ranking performance where the true relevance labels are available.
We also compare~\OurMethod{} with the state-of-the-art correction method for position and trust bias, i.e., \ac{AC}.
We compare them by training \ac{LTR} algorithms over~\OurMethod{}'s and \ac{AC}'s respective corrected outputs.
We show that in several cases~\OurMethod{} outperforms \ac{AC} by filling the gap between \ac{AC}'s ranking performance and the true relevance case.
%\ali{
Finally, since both~\OurMethod{} and \ac{AC} depend to the assumption of a click model to infer and correct for the bias, we conduct robustness experiments in terms of click model mismatch.
Specifically, we show that when clicks adhere to the \ac{DCM} or the \ac{UBM}, but a different click model, such as the \ac{PBM}, is assumed by the correction methods,~\OurMethod{} is more robust, i.e., its ranking performance is affected less compared to \ac{AC}.
%}

In summary, the contributions of the paper are:
\begin{enumerate}[leftmargin=*,nosep]
	\item We propose a new debiasing method,~\acf{MBC}, for correcting position and trust bias, and prove its unbiasedness.
	Our proof is stronger than the unbiasedness proofs for existing methods, as it does not rely on the assumption that the bias parameters are known.
	\item We show experimentally that, when used with \ac{LTR} methods, \OurMethod{} outperforms \ac{AC}, the state-of-the-art correction method for position and trust bias, in several settings, while having similar performance in other settings.
	\item We show that~\OurMethod{} is orders of magnitude faster than \ac{AC}, in terms of the training time.
	\item 
%	\ali{
	We provide experimental evidence that~\OurMethod{} is more robust to click model mismatch compared to \ac{AC}.
%	}
\end{enumerate}

% !TEX root = ../2021-clustering-cltr.tex

% \section{Existing method: Affine Correction}
% \vspace{-1em}
\section{Background}
\label{sec:AC}

The majority of prior work on unbiased \ac{LTR}, tries to correct for the mismatch between the distribution of clicks and relevance probability due to bias.
%Generally speaking, b
Bias in clicks means that not all relevant items have the same a priori chance of being clicked.
E.g., position bias means that relevant items at the top of a result list usually absorb more clicks than lower ranked relevant items~\citep{joachims2005accurately};
and trust bias means that users trust a search engine and click on higher ranked non-relevant items more than lower ranked items~\citep{agarwal2019addressing}.
Usually, these types of bias are modeled with the help of click models~\citep{chuklin2015click}.
% Most well-known click models are based on the examination hypothesis: A click occurs if the item is \emph{examined} by the user and found \emph{attractive}.

Below, we review existing methods for correcting position and trust bias.
After discussing \ac{AC} as the state-of-the-art correction method for position and trust bias, we analyze its relevance estimation error and show how inaccurate bias parameters cause \ac{AC} to remain biased.
% \footnote{\ac{AC} is a generalization of the \ac{IPS} method and our reasoning in this section straightforwardly applies to \ac{IPS} as well.}
We also discuss other work related to this paper.

%################################################
%################################################
%################################################
\subsection{A review of \ac{AC}}
% \paragraph{Position bias and \ac{IPS}}
% When only the position bias is present, the click probability can be written as follows:
% \begin{equation}
%     \label{eq:AC:positionbias}
%     \begin{split}
%         P(C=1 \mid q,d,k) =~& P(E=1 \mid q,d,k) \cdot P(R=1 \mid q,d),
%         % =~&\theta_{q,d,k}\cdot P(R=1 \mid q,d) 
%     \end{split}
% \end{equation}
% \noindent
% where $C$, $E$ and $R$ indicate click, examination and relevance binary indicators; $q$ and $d$ represent the query and item; and $k$ is the item position in the results list.
% \Ac{IPS} corrects for position bias by dividing the clicks by the examination probability $P(E=1)$~\citep{wang2016learning,joachims2017unbiased}.

% \paragraph{Trust bias and \ac{AC}}
\citet{agarwal2019addressing} notice that \ac{IPS} is still biased when there is also trust bias.
% ; roughly because \ac{IPS} only cares for the clicks over relevant items while with trust bias, clicks over non-relevant items should also be cared for.
In the presence of trust bias, the click probability should be written as follows (for brevity we drop the $(\cdot \mid q,d,k)$ conditions from all the probabilities, where $q$ and $d$ represent the query and item and $k$ is the item position in the results list.):
\begin{equation}
    \label{eq:AC:trustbias}
    \begin{split}
        P(C=1)  =~&P(E=1) \cdot P(R=1) \cdot P(C=1 \mid R=1,E=1) \\
        ~&+ P(E=1) \cdot P(R=0) \cdot P(C=1 \mid R=0,E=1)\\
        % =~& \theta \epsilon^+ P(R=1) + \theta \epsilon^- P(R=0)
        =~&\alpha P(R=1) + \beta.
    \end{split}
\end{equation}

\noindent%
where $C$, $E$ and $R$ indicate click, examination and relevance binary indicators.
\citet{vardasbi2020when} prove that the following correction gives an unbiased estimate of the relevance in this situation:
\begin{equation}
    \label{eq:AC:AC}
    \hat{r}_{q,d} = \frac{c_{q,d,k}-\beta_{q,d,k}}{\alpha_{q,d,k}}.
\end{equation}
\noindent
where $c_{q,d,k}$ is the click over document $d$ of query $q$ at position $k$; $\alpha$ and $\beta$ are bias parameters (Eq.~\ref{eq:AC:trustbias}); and $\hat{r}_{q,d}$ is the estimated relevance of $d$.

\subsection{\Acl{rbEM}}
\Ac{IPS} and \ac{AC} are unbiased only if the value of the bias parameters such as $\alpha$ and $\beta$ are known, or accurately estimated.
Since the standard \ac{EM} requires the availability of multiple sessions of the same query with different items ordering to work properly~\citep{DBN2009},~\citet{wang2018position} proposed to use \acfi{rbEM} to solve the sparsity problem.
In the \ac{rbEM}, the $P(R=1 \mid q,d)$ values obtained in the M-step are first used to fit a regression function, and then, the output of this regression function is used in the next E-step.
Though \ac{rbEM} leads to good results in the \ac{CLTR} framework with \ac{IPS} and \ac{AC}~\citep{agarwal2019addressing,vardasbi2020when,wang2018position}, in this paper we argue that the integration of a regression function into the \ac{EM} leads to multiple practical limitations, as we explain in Appendix~\ref{app:limitations}.

%################################################
%################################################
%################################################
\subsection{Error analysis of \ac{AC}}
\label{sec:existing:error}
Let us denote the true relevance of document $d$ to query $q$ by $r_{q,d}$, and the probability $P(r_{q,d}=1)$ by $\gamma_{q,d}$.
In the presence of trust bias, according to Eq.~\eqref{eq:AC:trustbias}, we have:
\begin{equation}
    \mExpect{c}{c_{q,d,k}} = \alpha_{q,d,k}\cdot\gamma_{q,d} + \beta_{q,d,k}.
\end{equation}
In what follows we drop the subscripts for brevity.

Supposing $\alpha'$ and $\beta'$ are estimates of $\alpha$ and $\beta$ obtained from the \ac{rbEM} algorithm, \ac{AC} estimates $r$ as follows:
\begin{equation}
    \hat{r} = \frac{c - \beta'}{\alpha'}.
    \label{eq:existing:affine}
\end{equation}
In order to have an unbiased estimator, we need to ensure that $\mExpect{c}{\hat{r}} = \mExpect{r}{r}$.
Therefore, we are interested in 
    $e = \left| \hat{r} - r \right|$.
Using Eq.~\eqref{eq:AC:trustbias}, we can calculate the expectation as follows:
\begin{equation}
    \begin{split}
    \mExpect{c,r}{e} &= \gamma \left|1 - \frac{\alpha + \beta - \beta'}{\alpha'}\right| +
    \left(1 - \gamma\right) \left|\frac{\beta - \beta'}{\alpha'}\right| \\
    &=\gamma \frac{|\Delta \alpha + \Delta \beta| - |\Delta \beta|}{\alpha'} + \frac{|\Delta \beta|}{\alpha'},
    \label{eq:existing:mu}
    \end{split}
\end{equation}
where $\Delta \beta = \beta' - \beta$ and $\Delta \alpha = \alpha' - \alpha$ are estimation errors of the bias parameters.
% Furthermore, the variance is computable as follows:
% \begin{equation}
%     \begin{split}
%         \sigma^2_{e} = {} & \gamma \frac{(\alpha + \beta)(1 - \alpha - \beta)}{n \alpha^{'2}} + 
%         \left(1 - \gamma\right) \frac{\beta(1 - \beta)}{n \alpha^{'2}} \\
%         & + \gamma \left(1 - \gamma\right) \left(\frac{\Delta \alpha + 2\Delta \beta}{\alpha'}\right)^2 \\
%         {}= {}& \frac{\mathbb{A_{\alpha,\beta,\gamma,\alpha'}}}{n} + \mathbb{B_{\gamma,\alpha'}} \cdot \left({\Delta \alpha + 2\Delta \beta}\right)^2.
%     \label{eq:existing:sigma2}
%     \end{split}
% \end{equation}
% Eq.~\eqref{eq:existing:sigma2} may seem complicated.
% But a wider look reveals that it is consisted of two terms:
% \begin{enumerate*}
%     \item An $n$ dependent term which decays inversely linear by $n$ and does not depend on the error of the bias parameter estimation; and
%     \item A constant term, which only disappears when the bias parameters are estimated accurately, i.e. $\Delta \alpha = \Delta \beta = 0$.
% \end{enumerate*}
% As a reminder, $n$ is the number of sessions associated with the query.
It is important to have the average error converge to zero as the number of sessions associated with the query grows.
Eq.~\eqref{eq:existing:mu} shows that only when $\Delta \alpha \rightarrow 0$ and  $\Delta \beta \rightarrow 0$, i.e., when the parameter estimations obtained from \ac{rbEM} are accurate, this is the case.
In summary, inaccurate bias parameters estimations cause the \ac{AC} method to remain biased, even with infinitely many training sessions.
% These error terms indirectly depend on $n$ in the regression-based~\ac{EM}.
% Intuitively, they shrink with growth of $n$.
% However, to the best of our knowledge, there is no theoretical proof for such intuition.
% , but, as discussed earlier, the choice of the building blocks in the regression-based~\ac{EM} leads to different shrinking behaviors.

% Here, we have shown that the average error and variance of the relevance estimator in the affine correction method depends on the error of estimated bias parameters.
% In the previous section, we have seen that using \ac{EM} for estimating the bias parameters is prone to errors.
% As a result, in \ac{EM} affine, the relevance estimation is not completely unbiased, and, that is why its performance is worse than the oracle affine.
% In order to compensate this performance gap, the average number of sessions per query should be increased so as to have more accurate estimations for the bias parameters, i.e. smaller $\Delta \alpha$ and $\Delta \beta$.
% Increased number of sessions per query means more training data.
% In practice, our experiments show that \ac{EM} affine requires \emph{substantially more} data compared to the oracle affine, in order to converge to the full information performance.
\if 0
%################################################
%################################################
%################################################
\subsection{Comparison of \ac{rbEM} \ac{AC} and~\OurMethod{}}

% \im{Please do not mention AC in this section, introduce your methods only.}
\ali{from method:}
While Eq.~\eqref{eq:method:existingmarginalized} forms the basis of the \ac{AC} method for debiasing, our proposed~\OurMethod{} method is based on Eq.~\eqref{eq:method:countmixture}.
% \input{sections/figure_tex/graphical_model}
% Let us explain the difference in the viewpoint by the help of a probabilistic graphical model in Fig.~\ref{fig:graphicalmodel}.
% In the \ac{AC} method, the correction relies on predicting $c_x$, by looking at $k$ and $x$, whereas in~\OurMethod{} we
The practical difference is the following: in \ac{AC} the probabilities $P(C=1 \mid R=i, k)$, for $i\in\{0,1\}$, are modeled by position dependent \emph{values}; in~\OurMethod{} we consider $P(v_x \mid R=i, k)$, $i\in\{0,1\}$, as two position dependent \emph{distributions}.
So, instead of estimating parameters for the probability of clicks over relevant or non-relevant items at position $k$, in~\OurMethod{}, we try to learn distributions of the number of clicks over relevant or non-relevant items.

\OurMethod{} tackles the click bias problem from a different perspective than the existing methods that use regression-based~\ac{EM}.
Here, we list and discuss about a number of specific differences.

\paragraph{Separate parameter estimation}
The most obvious difference is, in~\ac{EM} \ac{AC}, the parameter estimation and relevance ranking are done simultaneously.
At each step, each one supposedly helps the other to improve its accuracy.
On the contrary, in~\OurMethod{}, these two phases are disjoint.
This allows for separate theoretical guarantees for each phase as we have seen in Sec.~\ref{sec:method:error} for the parameter estimation of~\OurMethod{}.

% One consequence comes from a performance-focused point of view.
% The performance of the LTR part of the algorithm depends on the number of unique queries, while the parameter estimation part requires large number of sessions per query to perform well.
% Usually, the total number of sessions is reported as the size of the training data.
% This is because in the existing methods, parameter estimation and LTR are not separated.
% But in~\OurMethod{}, the things are different.
% Having separate parts for parameter estimation and LTR, we can see how each one of number of unique queries and average number of sessions per query impacts the performance.
% For example, the parameter estimation can be trained on a set of queries with large number of available sessions, after which, the LTR can use all of the queries with debiased clicks.
% We will discuss more about this property of~\OurMethod{} in our experiments.

\paragraph{Less manual model selections}
In \ac{rbEM}, there are two models: regression function and LTR algorithm.
The fact that these two models are interactively trained (see previous paragraph), makes it crucial to select the best matching regression function based on the choice for the LTR algorithm.
Furthermore, one has to be concerned with how these two models affect each other: the optimization of the two models are interleaved and any change in one model has an impact on the other.
\OurMethod{} solves this issue by having two disjoint parts which are optimized separately.

\paragraph{Convergence of EM}
Unlike standard EM, \ac{rbEM} is not guaranteed to move in the direction of zero gradient.
Our experiments show, in some cases, \ac{rbEM} diverges at some iterations (Sec.~\ref{sec:results:eminstability}).
This is not a concern in~\OurMethod{} as discussed in Sec.~\ref{sec:method:error}.

\paragraph{Efficiency}
The \ac{rbEM}, by design, requires a regression function to be fitted to the relevance probabilities at each maximization step of the EM algorithm.
Simple regression functions can be trained fast, but may delay the EM convergence or may cause the EM to never converge to the accurate values at all.
More complex regression functions perform better at each iteration, but need more time to be trained.
It is not clear where is the optimum point in the above trade-off.
Our experiments show that in order for the \ac{rbEM} AC to converge to the same ranking performance as~\OurMethod{}, it requires orders of magnitude more running time compared to~\OurMethod{} on the same hardware.
\fi
% \vspace{-0.3em}
\subsection{Mixture of distributions}
Our assumption of having a mixture of relevant and non-relevant distributions is related to score distribution models~\citep{arampatzis-2009-score}.
But~\OurMethod{} differs in two important ways.
First, unlike scores, clicks are feedback and constitute a source of supervision. 
This makes \ac{CLTR} used with \OurMethod{} a supervised learning algorithm as opposed to the unsupervised algorithms based on score distribution models.
Second, a score distribution model is built over the corresponding list of items for a single query, whereas our mixture model is built over the items with the same examination probability for different queries.
In this sense, our model has a global view over all queries, while the score distribution models have a local view over one query.
% at a time.

% \vspace{-0.3em}
\subsection{Other related work}
\label{sec:relatedworks}
Instead of using \ac{rbEM}, \citet{ai2018unbiased} propose to use the \ac{DLA}.
\ac{DLA} is shown to be effective in estimating the bias parameters and leading to an unbiased \ac{LTR}.
However, it only models the position bias and it is not clear how it can be extended to work with trust bias.

\citet{qin2020attribute} use \ac{rbEM} to estimate the attribute-based propensity, which considers different platforms and feedback sources in addition to the items positions.
\citet{dia2020urank} define a utility based on the click probability and propose to optimize that utility directly, instead of optimizing retrieval metrics with the hope that they may indirectly improve the \ac{CTR}.
Their approach solves the problem of bias parameter estimation by directly learning a position-aware click model from user interactions.
% \citet{ai2020unbiased} study the two branches of unbiased \ac{LTR}, i.e., online \ac{LTR} and \ac{CLTR}, and show that many of the state-of-the-art unbiased \ac{LTR} algorithms can be used in both online and offline settings.
% \citet{sato2020unbiased} and \citet{wang2020unbiased} consider both clicks and non-clicks, instead of only correcting for the clicks.

What our proposed method, \ac{MBC}, contributes on top of the related work discussed above is that it breaks the cyclic dependency between the bias parameters and relevance probability.
This means that in~\OurMethod{} the bias parameters are inferred only using the user interactions, without any direct reliance on the relevance probabilities.
In other words, in existing methods relevance estimation is unbiased if the bias parameters are accurately set, and the bias parameters can be set accurately if the relevance estimation is precise.
In contrast, with~\OurMethod{}, the correction and bias parameter estimation are performed at the same time, without any reliance on relevance estimation.
This enables us to avoid the use of regression functions inside the \ac{EM} algorithm, which is shown in Appendix~\ref{app:em} to have practical limitations.
% \mdr{Where is the proof that the use of regression inside EM is ``theoretically unclear''? Do you mean ``instability''? If so, please use the exact same words and refer to the appendix, Sec. \ref{sec:results:eminstability}?}

%\ali{
Finally, in \citep{chandar2018estimating,vardasbi2020cascade} it is argued that a mismatch between the actual model of clicks and the assumed click model for correction hurts ranking performance of the correction methods.
We address this issue by showing that~\OurMethod{} is robust to click model mismatch, specifically, when actual clicks adhere to \ac{DCM} or \ac{UBM} while~\OurMethod{} assumes \ac{PBM} for correction.
%}

% !TEX root = ../2021-clustering-cltr.tex

\section{Mixture-based Correction}
\label{sec:method}
% In~\acf{MBC}, we change the view point of existing methods for correcting the bias in the clicks.
In this section, we explain our~\acf{MBC} method and prove that
%given enough -and still practical amount of- clicks,
it gives an unbiased estimate of relevance.

%################################################
%################################################
%################################################
\subsection{Method}
\label{sec:method:method}
Our idea is to infer the relevance of an item to a user's query based on the observed \ac{CTR} for that query-item pair.
Similarly to previous work on \ac{CLTR}, we assume that user clicks on search results follow the examination hypothesis~\citep{joachims2017unbiased,ai2018unbiased,agarwal2019addressing,wang2018position,oosterhuis2020topkrankings,vardasbi2020when},
%Given a click log, this assumption allows us to compute a \ac{CTR} for each query-item by averaging the clicks in independent sessions.
that is, a click on an item (or, consequently, the \ac{CTR} for that item) is affected only by how likely the item is to be examined and perceived relevant by a user.
So there are two components, namely, examination and relevance of an item, that contribute to a click on the item.
Our goal is to estimate the relevance component.

To rule out the examination component, we consider items with the same examination probability $P(E=1)$.
In practice, $P(E=1)$ is not known and one needs a click model to decide which items have the same $P(E=1)$.
We will address this issue later in this section.
For now, assume that we know which items have the same $P(E=1)$.

For a set of items with the same examination probability $P(E=1)$ there is a certain distribution of \ac{CTR}s, $P(CTR = x)$.
Assuming binary relevance,\footnote{Graded relevance can be considered as the probability $P(R=1)$. See Sec.~\ref{sec:experiments} for further discussions.} this distribution has two parts:
one for relevant items and one for non-relevant items.
%Next, we assume that relevance is binary
%(we will discuss the graded relevance case in Sec.~\ref{sec:graded_relevance}).
%Graded relevance can be translated to a probability distribution over a binary relevance variable (See Sec.~\ref{sec:graded_relevance} for further discussions).
%Some of the above items are relevant $R=1$, some are non-relevant $R=0$; and we want to infer the hidden variable $R$ for each item in this set.
%Naturally, users tend to click on relevant items more than non-relevant items; otherwise learning from their interactions would not be reasonable.
So $P(CTR = x)$ can be seen as a mixture of two separate distributions: the distribution $P(CTR = x \mid R = 1)$ of CTRs of relevant items and the distribution $P(CTR = x \mid R = 0)$ of \ac{CTR}s of non-relevant items. 
% Given enough sessions for each query, the variance of these two clusters shrinks and they are fully separable (see Sec.~\ref{sec:method:error}).
Formally:
\begin{align}
    \label{eq:mixture}
    P(CTR = x) =&\ P(R = 1) \cdot P(CTR = x  \mid R = 1) \nonumber \\
    &{}+\ P(R = 0) \cdot P(CTR = x  \mid R = 0).
\end{align}
\noindent
%There are standard methods to estimate the mixture from the observed \ac{CTR}s, i.e. estimating $P(R = i)$ and $P(CTR = x  \mid \mathscr{D}_i)$ for $i\in\{0,1\}$.
%The estimation of the mixture will be discussed later. 
Now, we can reach our goal and calculate the probability of relevance based on the observed \ac{CTR} using Bayes' rule: 
\begin{equation}
    \label{eq:method:relprob}
    P(R = 1 \mid CTR = x) = \frac{P(R = 1) \cdot P(CTR = x  \mid R = 1)}{P(CTR = x)}.
\end{equation}
These relevance probabilities can ultimately be used for \ac{CLTR}.
Algorithm~\ref{alg:mbc} summarizes the above steps of~\OurMethod{}.

For our~\OurMethod{} method to work, it remains to discuss two things:

\begin{algorithm}
    \caption{\Acl{MBC}}
    \label{alg:mbc}
    \SetAlgoLined
    \KwIn{\Aclp{CTR}}
    \KwOut{Estimates of relevance probability}
    Constitute sets of items with the same $P(E=1)$\;
    \ForAll{set of items with the same $P(E=1)$}{
        Fit a 2-component mixture model to Eq.~\eqref{eq:mixture}\;
        Output the relevance of items according to Eq.~\eqref{eq:method:relprob}\;
    }
\end{algorithm}

\noindent%
\begin{enumerate*}
    \item How to estimate the mixture in Eq.~\eqref{eq:mixture} (Line \numprint{3} in Alg.~\ref{alg:mbc}); and
    \item How to get items with the same examination probability (Line \numprint{1} in Alg.~\ref{alg:mbc}).
\end{enumerate*}
This is what we turn to next.

\paragraph{Mixture estimation}
To infer distributions $P(CTR = x  \mid R = i)$ and priors $P(R=i)$ for $i\in\{0,1\}$ in Eq.~\eqref{eq:mixture}, parametric approaches can be used.
For example, we can assume a Gaussian or a binomial mixture model.
As is common with parametric mixture models, we use standard~\ac{EM} to learn the above distributions and priors~\citep{mclachlan1988mixture}.
% \todo{So how is the mixture estimated actually? Any hints/references?}

The limitation of the parametric approach to estimating mixtures is that it depends on the choice of the underlying parametric distribution (e.g., Gaussian, binomial, etc).
However, in Sec.~\ref{sec:method:error}, we show that as long as there are enough clicks,
%  and the probability of click over non-relvant items is less than that of relevant items, 
the choice of this distribution is not essential to our method.
Also, in our experiments in Sec.~\ref{sec:results:mixture}, we compare Gaussian and binomial mixture distributions and show that, in practice, both converge to the same performance.

\paragraph{Items with the same examination}
The~\OurMethod{} method works on sets of items with the same examination probability.
Note that~\OurMethod{} does not require the exact values of examination probabilities,
it only requires to know which items have the same examination.
To detect the desired sets of items, we propose to use click models~\cite{chuklin2015click}.
%So we will not estimate the parameters of the click models: we only use their assumptions.

For instance, we can assume that clicks adhere to the \ac{PBM} as is common in \ac{CLTR}~\citep{joachims2017unbiased,ai2018unbiased,agarwal2019addressing,wang2018position,oosterhuis2020topkrankings,vardasbi2020when}.
In \ac{PBM}, \emph{all items at the same position in a results list} have the same examination probability.
And that is all we need from the assumed click model.
For cascade models, such as the \ac{DCM}~\citep{DCM2009} and the \ac{UBM}~\citep{UBM2008},
\emph{all items at the same position} and with \emph{the same pattern of relevant items above them} have the same examination probability.
In the case of \ac{DCM} and \ac{UBM}, the desired set of items can be detected recursively:
to collect items with the same examination probability at position $k$,
we first use~\OurMethod{} to detect all relevant items at positions above $k$, and then we group items with the same pattern of relevant items above $k$.
%\ali{
Note that this recursive process is not cyclic: the grouping of clicks at rank $k$ depends on the reliability of the model at rank $k-1$.
%}
Again, no parameter estimation for the assumed click models is required.

%\ali{
\begin{remark}
\label{rem:almostthesame}
    Detecting items with the same examination may seem a bottleneck in real world scenarios.
    For~\OurMethod{} to work properly, the distributions of relevant and non-relevant \ac{CTR}s should be separable.
    This means that the condition of \emph{items with the same examination} can be loosened by \emph{items with \textbf{almost} the same examination}.
    To elaborate, suppose a set of items are considered whose examination probabilities are either $\theta$ or $\theta'$.
    Each examination probability leads to one distribution for relevant \ac{CTR}s, and one for non-relevant \ac{CTR}s: there are two relevant \ac{CTR} distributions and two non-relevant \ac{CTR} distributions in the set.
    The relevant (non-relevant) \ac{CTR} distribution of all the items in the set is itself a mixture of two distributions corresponding to $\theta$ and $\theta'$.
    As long as these two relevant/non-relevant distributions of all items are separable,~\OurMethod{} would be able to distinguish relevant items from non-relevants, and the \ac{LTR} remains unbiased.
    This argument is easily extended to cases with more than two different examination probabilities.
    We will discuss this remark more in Sec.~\ref{sec:results:cmranking_performance} and give a toy example for further clarifications.
\end{remark}
%}
% \im{
% This is an interesting remark. I have a few questions about it:
% \begin{itemize}
%     \item Why is it important to have this remark here? What is its main message?
%     \item I do not understand the idea that "each $P(CTR = x  \mid R = 1)$ and $P(CTR = x  \mid R = 0)$ can themselves be a mixture of distributions". Could you please elaborate on that?
%     \item I do not see how to following claim is supported and why it is correct: "As long as different $P(E=1)$ for the considered set of items are such that these two relevant/non-relevant mixtures are separable,~\OurMethod{} works properly."
% \end{itemize}}
% \subsection{Probabilities as relevance labels}
% \todo{Should all the text below go to Sec. 3.2? It certainly does not belong to 3.1.}

\noindent%
In the remainder of this section, we prove that the~\OurMethod{} inferred relevance labels are unbiased estimations of the true relevance labels.
Consequently, a \ac{LTR} algorithm trained on these corrected values will be an unbiased \ac{LTR}.
\subsection{Unbiasedness of~\OurMethod}
\label{sec:method:error}
In this paper, similar to most previous work on online \ac{LTR} and \ac{CLTR}, we assume that clicks on different sessions are independent~\citep{chuklin2015click, zoghi2017online, oosterhuis2018differentiable,joachims2017unbiased,ai2018unbiased,agarwal2019addressing,wang2018position,oosterhuis2020topkrankings,jagerman2019comparison,vardasbi2020when, vardasbi2020cascade}.
As discussed in Sec.~\ref{sec:method:method}, we consider the set of items with almost the same examination probability and fit a mixture model for each such set.
We assume that clicking on a relevant item is a random variable with mean $\mu_1$ and variance $\sigma^2_1$.
Similarly, clicking on a non-relevant item has mean $\mu_0$ and variance $\sigma^2_0$.
For a unique query, repeated in $n$ sessions, assume the clicks over the item $x$ are given by $\{c^{(x)}_1,c^{(x)}_2,\ldots,c^{(x)}_n\}$.
We prove that the \ac{CTR} defined as
\begin{equation}
    v_x=\frac{c^{(x)}_1+c^{(x)}_2+\cdots+c^{(x)}_n}{n},
\end{equation}
can be used to estimate the relevance of $x$, given that $n$ is large enough.

\begin{theorem}
Assuming independent clicks in different sessions, the clustering of the \ac{CTR} signal into relevant and non-relevant items converges in probability to the true relevance of the items.
\label{thm:MBC}
\end{theorem}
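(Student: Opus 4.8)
The plan is to show that the sample CTR of each item concentrates around one of two well-separated values, so that any two-component clustering recovers the true labels with probability tending to one. First I would fix a set of items sharing the same examination probability $P(E=1)$ and condition on the true relevance label of each item $x$, writing $\mu_{R(x)}\in\{\mu_0,\mu_1\}$ and $\sigma^2_{R(x)}\in\{\sigma^2_0,\sigma^2_1\}$ for its mean and variance. Since clicks in distinct sessions are independent, the observations $c^{(x)}_1,\ldots,c^{(x)}_n$ are \ac{i.i.d.}\ with mean $\mu_{R(x)}$ and variance $\sigma^2_{R(x)}$, so $v_x$ is a sample mean and $\mExpect{c}{v_x}=\mu_{R(x)}$ with variance $\sigma^2_{R(x)}/n$. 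By the weak law of large numbers (equivalently, Chebyshev's inequality),
\begin{equation}
    P\mpar{\left|v_x - \mu_{R(x)}\right| \ge \epsilon} \le \frac{\sigma^2_{R(x)}}{n\,\epsilon^2},
\end{equation}
so $v_x$ converges in probability to $\mu_1$ when $x$ is relevant and to $\mu_0$ when $x$ is non-relevant.

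The second ingredient is separability. Under the examination hypothesis an examined relevant item is clicked strictly more often than an examined non-relevant one, so for items with a common $P(E=1)$ we have $\mu_1 = P(E=1)\,P(C=1\mid R=1,E=1) > P(E=1)\,P(C=1\mid R=0,E=1) = \mu_0$. Fixing any threshold $\tau$ with $\mu_0 < \tau < \mu_1$ and setting $\epsilon = \min\{\mu_1-\tau,\ \tau-\mu_0\}$, the rule that labels $x$ relevant iff $v_x > \tau$ misclassifies a relevant item with probability at most $\sigma^2_1/\mpar{n(\mu_1-\tau)^2}$ and a non-relevant item with probability at most $\sigma^2_0/\mpar{n(\tau-\mu_0)^2}$. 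Both bounds vanish as $n\to\infty$, which is exactly convergence in probability of the clustering to the true relevance.

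The final step connects this threshold argument to the clustering actually performed by the fitted mixture via Eq.~\eqref{eq:method:relprob}. As $n$ grows the empirical distribution of $\{v_x\}$ collapses onto two tight clusters around $\mu_0$ and $\mu_1$ whose separation $|\mu_1-\mu_0|$ eventually dwarfs the residual spread $\sigma_i/\sqrt{n}$; hence the two mixture components become arbitrarily well separated and the posterior assignment of Eq.~\eqref{eq:method:relprob} (assign $x$ to the component of higher likelihood) coincides with a threshold rule of the type above. This is also the reason the particular parametric family chosen for the mixture is asymptotically irrelevant, as claimed in Sec.~\ref{sec:method:method}.

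I expect the last step to be the main obstacle: making precise that a \emph{generic} two-component mixture estimator, rather than the idealized threshold, recovers the labels. The concentration bound only certifies that the true labels become asymptotically separable by a threshold in the one-dimensional CTR signal; one still needs the two conditional CTR distributions to be identifiable and the \ac{EM}-fitted posterior to respect this separation. The separability condition $\mu_1 \neq \mu_0$ is therefore the crux of the argument, and it is exactly what the examination hypothesis guarantees; the weaker \emph{almost-the-same examination} setting of Remark~\ref{rem:almostthesame} is handled by the same reasoning as long as the resulting relevant and non-relevant CTR distributions remain separable.
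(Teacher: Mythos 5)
Your proposal is correct and follows essentially the same route as the paper's proof: both argue that, by independence across sessions, the per-item sample CTR $v_x$ concentrates (with variance shrinking like $1/n$) around one of the two means $\mu_0$ or $\mu_1$, so that for sufficiently large $n$ the relevant and non-relevant CTR distributions separate and full recovery of the mixture components is possible. If anything, your version is slightly more careful: you use the weak law of large numbers/Chebyshev where the paper loosely invokes the CLT for the same concentration claim, you make the separability condition $\mu_1 > \mu_0$ explicit, and you flag the same residual gap---that the fitted mixture, rather than an idealized threshold, recovers the labels---which the paper also leaves informal, deferring quantitative recovery conditions to the Gaussian-mixture literature it cites immediately after the proof.
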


\begin{proof}
Based on the assumptions, the $c^{(x)}_i$'s constitute a sequence of \ac{i.i.d.} random variables.
According to the \ac{CLT}, as $n$ grows, $v_x$ will converge in probability to $\mExpect{c}{c^{(x)}_i}$, which is either $\mu_0$ or $\mu_1$.
We are interested in the case where a full recovery of mixture models is possible, i.e., one can fully separate the distribution of relevant \acp{CTR} from the distribution of non-relevant \acp{CTR}.
The variance of $v_x$ diminishes linearly by $n$.
Consequently, given any threshold value for the variance for which a full recovery is possible, there is a sufficiently large $n$ that leads to the given threshold value.
This means that there exists an $n$ such that a full recovery of the mixture components is possible.
\end{proof}

\noindent%
We used the \acl{CLT} for the proof of Theorem~\ref{thm:MBC}, which does not rely on any specific distribution.
In order to get a better understanding of what constitutes a sufficiently large $n$ for full recovery, we discuss the special case of a Gaussian mixture.
There is a rich literature on the analysis of the recoverability of Gaussian mixture models \citep[see, e.g.,][]{chen2020cutoff}.
In~\citep{lu2016statistical} a simple condition is given for \emph{almost full recovery} of Gaussian mixtures, which translates to the setting of this paper as follows:
\begin{equation}
    n = \Omega\left(\left(\frac{\max(\sigma_0, \sigma_1)}{\mu_1 - \mu_0}\right)^2\right).
    \label{eq:method:afullrecovery}
\end{equation}
\noindent
This means that for any $\mu_i$ and $\sigma_i$ values, if we increase the number of sessions $n$ so that Eq.~\eqref{eq:method:afullrecovery} holds, the \ac{CTR} of relevant and non-relevant items can be almost fully separated.
In our experiments, where the parameters are set based on previous real-world studies such as~\citep{agarwal2019addressing}, Eq.~\eqref{eq:method:afullrecovery} becomes $n=\Omega(1)$ and we find $n\geq 15$ to be a suitable value based on the convergence of ranking performance.
% \footnote{It is a common practice to filter out samples with less than $10$ or $20$ interactions in recommendation systems trained over real-world user interactions \citep{kang2019semi, zhao2020catn}.}.  
% !TEX root = ../2021-clustering-cltr.tex

\section{Experimental setup}
\label{sec:experiments} 
In this section, we discuss the experimental setup used to demonstrate the effectiveness of our proposed~\OurMethod{} method.
%compared to state-of-the-art approaches to removing position and trust bias in clicks, namely, \ac{IPS} and \ac{AC}.
%In our experiments, we are given a click log consisting of queries, corresponding rankings of items, and click information for these items.
%The clicks suffer from position and trust bias.
%We want to use this click log as training data to train a \ac{LTR} model.
%We are interested in observing the effectiveness of our proposed~\OurMethod{} method compared to state-of-the-art approaches to removing position and trust bias in clicks.
Following previous \ac{CLTR} studies~\citep{joachims2017unbiased, ai2018unbiased, jagerman2019comparison,agarwal2019addressing, oosterhuis2020topkrankings, vardasbi2020when}, we measure the effectiveness of click debiasing by the ranking performance of \ac{LTR} when used on top of debiasing methods.
Our experiments are performed on two publicly available \ac{LTR} datasets with query-document features and relevance labels, while clicks are simulated.

\subsection{Datasets}
As a regular choice in \ac{LTR} research~\citep{joachims2017unbiased, ai2018unbiased, oosterhuis2020topkrankings, jagerman2019comparison, vardasbi2020when},
we use two popular \ac{LTR} datasets: Yahoo! Webscope~\citep{Chapelle2011} and MSLR-WEB30k~\citep{qin2013introducing}.
For each query, these datasets contain a list of documents with human-generated 5-level relevance labels.
Yahoo!\ has \numprint{29670} queries, 23.84 documents per query on average, and uses 700-feature vectors to represent query-documents;
MSLR has \numprint{31339} queries, 120.19 documents per query, and uses 136 features.
We use the default provided training, validation and test splits for each dataset.
We only use the first fold of MSLR.

\subsection{Click simulation}
\label{sec:experiment:clicksimulation}
% Abstract level
In our experiments, we measure the performance of \ac{LTR} trained on user clicks.
The Yahoo! and MSLR datasets, however, do not contain click information.
Following a long line of previous studies~\citep{joachims2017unbiased, ai2018unbiased, oosterhuis2020topkrankings, jagerman2019comparison, vardasbi2020when,hofmann-2011-balancing},
we simulate clicks as follows.

% pipeline
\paragraph{Production ranker}
First, we simulate a production ranker that is used to create a ranked list of items for each query.
Following~\cite{joachims2017unbiased}, we train a \ac{LTR} method on a very small number of randomly selected queries.
The intuition is to provide a production ranker that is better than a random ranker, but still has room for improvement.
We select 20 random queries from each dataset and use LambdaMART~\citep{burges2010ranknet} to train a production ranker.
% \im{Why 20 queries?}
To train LambdaMART, we use the LightGBM package (version 3.2.1.99)~\citep{ke2017lightgbm} with the following parameters:
300 trees, 31 leaves and a learning rate of $0.05$.
Training in this way with 20 queries gives us a decent ranker that performs considerably better than a random ranker, while still having room for improvement.
Hence, we can exploit user interactions to improve this production ranker.

The production ranker is then used to rank items for each query.
We cut the list of items at top-$m$, as do real-world search engines.
We report results for $m = 20$.
We also performed experiments with the top-$50$, but since the results showed no significant difference compared to the top-$20$, we do not report them in the paper.

\paragraph{User clicks}
To simulate clicks, we follow previous studies on trust bias~\citep{agarwal2019addressing,vardasbi2020when}.
Given a list of items returned by the production ranker for a query $q$, we first compute the click probability for each item $x$ and position $k$ using Eq.~\eqref{eq:AC:trustbias} and the \ac{PBM} assumption.
(For experiments in Sec.~\ref{sec:results:cmranking_performance} we replace \ac{PBM} with \ac{DCM} and \ac{UBM}.)
Then, for each item $x$ and position $k$ we simulate a click by sampling from this Bernoulli distribution.
Unless stated otherwise, we use $8M$ clicks (with uniformly repeated queries) to train the correction methods.
In our experiments, both~\OurMethod{} and \ac{AC} begin to converge to their final performance after $2M$ clicks.
We choose $8M$ to be on the safe side.
Eq.~\eqref{eq:AC:trustbias} depends on two quantities:
\begin{enumerate*}
    \item relevance probabilities; and
    \item bias parameters.
\end{enumerate*}
We will discuss both bellow.

\paragraph{Relevance probabilities}
Both datasets used in our experiments provide graded relevance labels.
For simulating the clicks in a graded relevance setting, a transformation function is required to change the integer grades into valid relevance probabilities.
We employ the following two strategies, assuming $y\in\{0,\ldots,y_{\max}\}$ is the relevance grade:
\begin{enumerate}[leftmargin=*,nosep]
    \item {\bfseries Binarized}: Following~\cite{joachims2017unbiased,vardasbi2020when} relevance probability can itself be binary:
    % \begin{equation*}
        $P(R=1 \mid y) = 1 \quad\text{iff}\quad y > \frac{y_{\max}}{2}$.
    % \end{equation*} 
    % \item {\bfseries Exp}onential: Following~\citep{chapelle2009expected,ai2018unbiased} relevance probability can be computed by:
    % \begin{equation*}
    %     P(R=1 \mid y) = \frac{2^y - 1}{2^{y_{\max}} - 1}.
    % \end{equation*}
    \item {\bfseries Graded}: The grades can also be turned into probabilities using a linear transformation~\citep{oosterhuis2021unifying}:
    % \begin{equation*}
        $P(R=1 \mid y) = \frac{y}{y_{\max}}$.
    % \end{equation*}
\end{enumerate}

\paragraph{Bias parameters}
%We use the results of previous works on real user behavior~\citep{joachims2005accurately,agarwal2019addressing} to model the bias parameters, $\theta_k$, $\epsilon^+_k$ and $\epsilon^-_k$ in Eq.~\ref{eq:experiment:trustbias}.
Similarly to~\citep{joachims2017unbiased, ai2018unbiased, oosterhuis2020topkrankings, jagerman2019comparison, vardasbi2020when}, the position bias for a position $k$ is computed as 
% \begin{equation*}
    $P(E=1 \mid k) = k^{-\eta}$,
% \label{eq:experiments:theta}
% \end{equation*}
where the parameter $\eta$ controls the severity of the position bias.
Usually, $\eta = 1$ is considered in the \ac{CLTR} literature~\citep{joachims2017unbiased,ai2018unbiased,vardasbi2020when}.
In our experiments we consider $\eta \in \{1, 2\}$.

For the trust bias, we follow~\citep{vardasbi2020when} and for each position $k$ compute the parameters as follows:\footnote{Similar values were reported in the real-world experiments in~\citep{agarwal2019addressing} and used in simulations in~\citep{oosterhuis2021unifying}.}
\begin{align}
    \label{eq:epsilon}
    P(C=1 \mid R=1,E=1, k) & = 1 - \frac{\min(k, 20)+1}{100}
    \\
    P(C=1 \mid R=0,E=1, k) & = \frac{0.65}{\min(k, 10)}.
\end{align}

\subsection{\Acl{LTR}}
We train a \ac{LTR} method over the corrected output of different click debiasing methods.
In order to show the consistency of our results over different \ac{LTR} methods, we use two \ac{LTR} approaches:
\begin{enumerate*}
    \item {\bfseries LambdaMART}~\citep{burges2010ranknet}; and
    \item {\bfseries DNN}, a neural network similar to that of~\citep{vardasbi2020when}.
\end{enumerate*}
The LambdaMART implementation of LightGBM only works with integer labels, but the input of \ac{LTR} in our experiments is the relevance probability, which is non-integer.
To solve this problem, we made some minor modifications to the source code of LightGBM in order to make it work with non-integer inputs as well.
The changed source files are included in the code repository of this paper (see the end of the paper for the link).

% Per experimental setting, we simulate up to $8 \cdot 10^6$ clicks on the training set.
% The number of validation clicks is always 15\% and 33\% of the training clicks for Yahoo! and MSLR, respectively.
% These numbers were chosen to match the ratio between the number of training and validation queries in each dataset.
% We apply \citet{agarwal2019addressing}'s trust bias model with varying parameters (see Section~\ref{sec:trust}).

% \subsection{Evaluation measures}

\subsection{Baselines}
We compare the results of \OurMethod{} to those of \ac{AC}~\citep{vardasbi2020when}, the state-of-the-art click debiasing method for position and trust bias.
We do not include \ac{IPS} in our comparison, since 
\begin{enumerate*}
    \item it cannot correct for trust bias, and
    % meaning that its ranking performance is significantly below that of \ac{AC} when trust bias is present, and
    \item \ac{AC} is a generalization of \ac{IPS} that leads to the same performance when trust bias is absent~\citep{vardasbi2020when}.
\end{enumerate*}
In summary, we compare the ranking performance of \ac{LTR} trained on debiased clicks of our~\OurMethod{} method, with the following settings:
\begin{enumerate}[leftmargin=*,nosep]
    % \item \textbf{Binomial/Gaussian soft/hard~\OurMethod}: \OurMethod{} with soft/hard labels and Binomial/Gaussian mixture assumption;
    % \item \textbf{ soft~\OurMethod}: \OurMethod{} with soft labels and Gaussian mixture assumption;
    % \item \textbf{Hard~\OurMethod}: \OurMethod{} with hard labels;
    \item \textbf{Ideal-\ac{AC}}: \ac{AC} with true bias parameters.
    This gives the highest potential of \ac{AC} and is not realistic, since it uses the true values for bias parameters;
    \item \textbf{\ac{AC}}: \ac{LTR} trained on debiased clicks of \ac{AC}, using \ac{rbEM} for propensity estimation (See~\ref{sec:results:eminstability} for the choice of regression function);
    \item \textbf{No correction}: \ac{LTR} trained on clicks without any debiasing;
%    This gives a good measure for understanding the contribution of debiasing methods in the improvement of ranking performance over the production ranker;
%\end{enumerate}
%We will also include the ranking performance of one additional \ac{LTR} run which does not use clicks:
%\begin{enumerate}[leftmargin=*,resume]
    % \item \textbf{Production}: The production ranker used for simulating the search engine and sorting the items initially.
    % This shows how much the clicks helped to improve the ranking.
    \item \textbf{Relevance probabilities}: \ac{LTR} trained on the true relevance probabilities, obtained from the true relevance labels. 
    This depends on the strategy that is used to transform the integer relevance grades into probabilities (see paragraph \emph{Relevance probabilities} in Sec.~\ref{sec:experiment:clicksimulation}).
    % \im{I would call this setting ``Unbiased'' instead of ``Full Info'', because it uses true relevance labels, which are unbiased by definition. Or simply use ``Relevance labels''. Then there is no problem with full/partial info stuff discussed below. You do not even need to say that you only use the top-$m$ relevance labels. It is simply this: for each ranked item you use its true relevance to train LTR. Since you have $m$ ranked items, you use $m$ relevance labels.}
\end{enumerate}

\noindent%
In our experiments, we use the same \ac{LTR} algorithm for all the methods listed above, so that any differences in ranking performance are solely due to the correction method, and no other factors such as \ac{LTR} performance.

% \im{I would rename ``Full Info'' to ``Unbiased'' or ``Relevance labels'' and drop this paragraph completely.}
% Note that the \textbf{Full Info} setting shows the highest possible performance of any counterfactual method, since it uses full relevance information instead of partial information of biased clicks.
% In our experiments, \textbf{Full Info} only uses relevance labels of the top-$m$ items.
% \im{Why is that? Is this based on other similar studies?}
% \im{Also, this does not seem to me like a real full info case, it is still a partial info setting.}
% This gives a tight upper-bound for the ranking performance of \ac{LTR} used on top of the correction methods, in each top-$m$ setting.
% \im{I did not understand the last sentence.}

\subsection{Metrics}
We measure the ranking performance of different methods by \ac{NDCG}.
As we are evaluating using graded relevance datasets, other metrics such as MAP cannot be used.
All reported nDCG@10 results are an average of eight independent runs.
We use the Student's t-test to determine significant differences.

% !TEX root = ../2021-clustering-cltr.tex

\section{Results}
\label{sec:results} 
Our experimental results are centered around three benefits of \OurMethod{} compared to \ac{AC}.
The first -- and most important -- benefit is the ranking performance (Sec.~\ref{sec:results:ranking_performance}) and efficiency (Sec.~\ref{sec:result:efficiency}) improvement.
The second is its robustness to click model mismatch (Sec.~\ref{sec:results:cmranking_performance}).
The third benefit is that it requires almost no hyper-parameter tuning as opposed to \ac{rbEM}.
Since \OurMethod{} is based on a standard mixture model, it is free of the so-called hyper-parameters prevalent in deep learning models.
On the other hand, \ac{rbEM} uses a regression function that has a significant impact on the effectiveness and efficiency of the unbiased \ac{LTR} algorithm (as we show in Appendix~\ref{sec:results:eminstability}).

% \im{Where does this paragraph come from? It does not seem to be connected to the previous text. For example, efficiency has not even been discussed yet.}
% Additional to~\OurMethod{}'s better performance and efficiency, in Appendix~\ref{sec:results:eminstability} we will show some practical limitations in running the \ac{rbEM} \ac{AC} that can make it unstable, hence suggesting that \OurMethod{} is a safer choice for practical matters as well.
% In all of the plots, showing the ranking performance, results are averaged over eight runs and shaded area indicates the standard deviation. 

	% !TEX root = ../2021-clustering-cltr.tex

%################################################
%################################################
%################################################
\subsection{An insider's look into corrected clicks}
\label{sec:results:insiders}
Before proceeding with the main experimental analysis, we compare the shape of the corrected clicks obtained from~\OurMethod{} and \ac{AC}.
This qualitative analysis is insightful in understanding the intrinsic difference between the two methods.
% It also enables us to provide intuitions on some observed behaviors of 
\setlength{\tabcolsep}{0.2em}

{\renewcommand{\arraystretch}{0.3}
\begin{figure}[t]
\centering
\begin{tabular}{l c c}
& \small Position $1$
& \small Position $9$
\\
\begin{tabular}{c}
\rotatebox[origin=lt]{90}{\small Estimated $P(R=1)$}
\end{tabular}
&
\begin{tabular}{c}
    \includegraphics[scale=0.33]{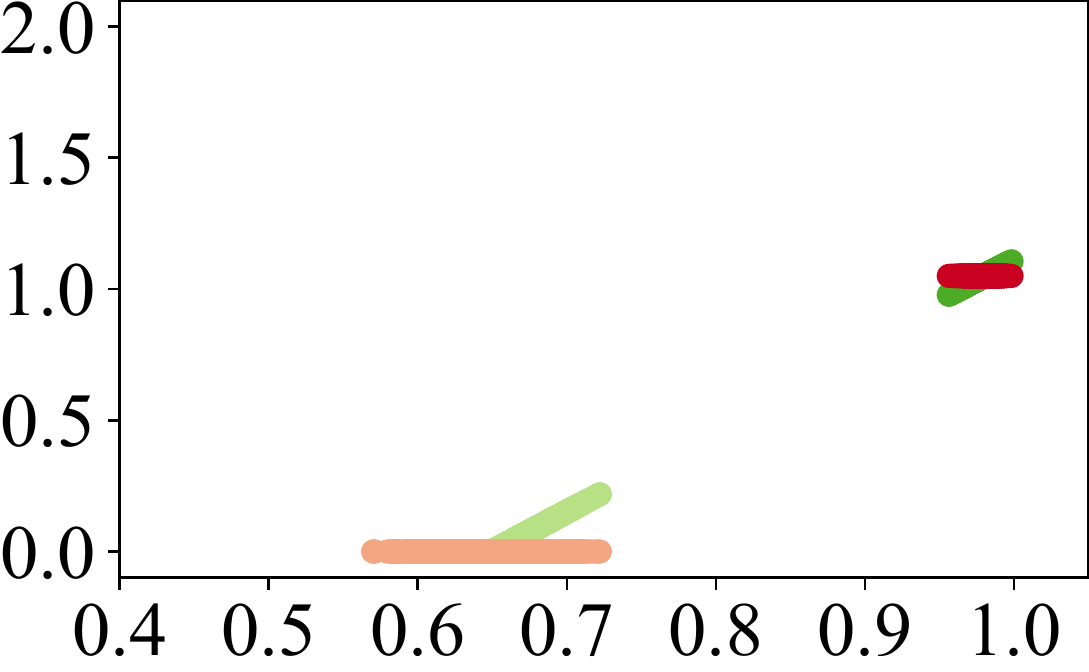} \\
    \small CTR
\end{tabular}
&
\begin{tabular}{c}
    \includegraphics[scale=0.33]{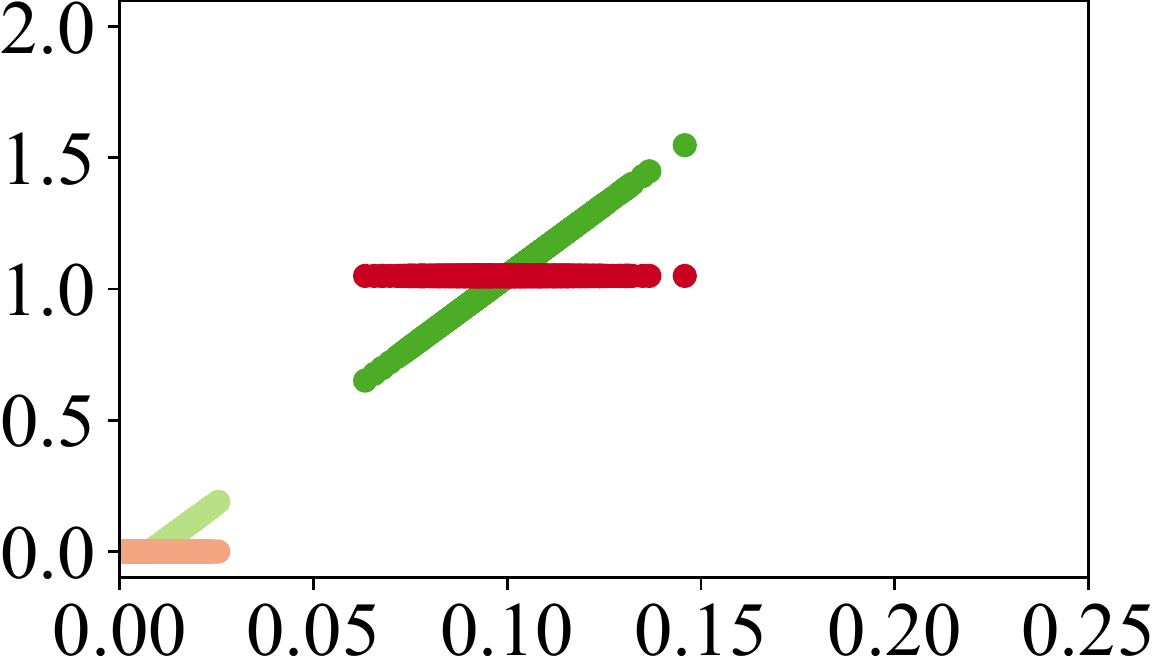} \\
    \small CTR
\end{tabular} 
% \\
% \multirow{2}{*}{\rotatebox[origin=lt]{90}{\hspace*{2.5em}\small Corrected relevance label}}
% & \includegraphics[scale=0.3]{sections/figure/2000000_trust_1_top20_pos0}
% & \includegraphics[scale=0.3]{sections/figure/2000000_trust_1_top20_pos8}
% \\
% & \includegraphics[scale=0.3]{sections/figure/16000000_trust_1_top20_pos0}
% & \includegraphics[scale=0.3]{sections/figure/16000000_trust_1_top20_pos8}
% \\
% & \small CTR
% & \small CTR
\\ 
 \multicolumn{3}{c}{
 \includegraphics[scale=.45]{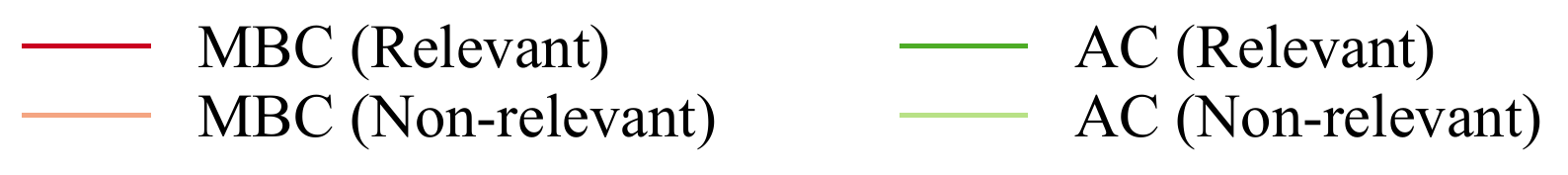}
} 
\end{tabular}
% \vspace{-0.5\baselineskip}
\caption{
Correction comparison of our~\OurMethod{} method with \ac{AC} in inferring the relevance label from the observed CTR.
% The dots representing the relevant items are shifted up by $+0.05$ for better visualization.
% Top row: 2M clicks; Bottom row: 16M clicks.
}
\label{fig:results:insider_look}
\end{figure}
}

Fig.~\ref{fig:results:insider_look} looks into the corrected \ac{CTR} of~\OurMethod{} and \ac{AC}.
In this figure, the \ac{CTR} in two different positions of a ranked list are observed.
We used the binarized relevance probability (Sec.~\ref{sec:experiments}) in this figure for simpler illustration purposes.
We see that with enough sessions per query, the relevant and non-relevant items become completely disjoint.
Consequently,~\OurMethod{} is able to correctly distinguish between the relevant and non-relevant items and infer their relevance label.
The complex transformation of \acp{CTR} into relevance labels in~\OurMethod{} allows for accurate estimations of relevance.
On the other hand, the linearity of~\ac{AC} leads to wider range of relevance labels, scattered around the true values of zero and one.

% The top row in Fig.~\ref{fig:results:insider_look} reveals that when there is not a large number of sessions per query, the \ac{CTR} values have higher variance, so relevant and non-relevant items can be mixed in some areas.
% Both~\OurMethod{} and \ac{AC} are uncertain about the \ac{CTR} values in the mixed area and their inferred relevance labels lies around $0.5$.
% Compared to~\OurMethod{}, \ac{AC} treats the uncertain \ac{CTR} values smoother.
% This can be the reason of better ranking performance of \ac{AC} at lower number of training clicks: when there is not enough sessions per query to make a solid decision about the relevance tag, the inherent smoothness of \ac{AC} results in smaller errors compared to the inherent sharpness of~\OurMethod{}.

	% !TEX root = ../2021-clustering-cltr.tex

%################################################
%################################################
%################################################

\begin{table*}[t]
    \caption{NDCG@10 comparison of~\OurMethod{} and \ac{AC} on Yahoo! Webscope and MSLR-WEB30k datasets.
    Superscripts $^*$ and $^\dagger$ indicate significance compared to the other correction method with $p<0.01$ and $p<0.1$, respectively.}
    \label{tab:results:rankingcomp}
    \setlength{\tabcolsep}{0.6em}
    \begin{tabular}{l l cccc| cccc}
        \toprule
        &
        & \multicolumn{4}{c|}{Yahoo! Webscope}
        & \multicolumn{4}{|c}{MSLR-WEB30k}
        \\
        \cmidrule(lr){3-6}\cmidrule(lr){7-10} 
        &
        & \multicolumn{2}{c}{Normal bias ($\eta=1$)} & \multicolumn{2}{c|}{Severe bias  ($\eta=2$)}
        & \multicolumn{2}{|c}{Normal bias ($\eta=1$)} & \multicolumn{2}{c}{Severe bias  ($\eta=2$)}
        \\
        \cmidrule(lr){3-4}\cmidrule(lr){5-6}\cmidrule(lr){7-8}\cmidrule(lr){9-10} 
        &
        & \bf Binarized     & \bf Graded
        & \bf Binarized     & \bf Graded
        & \bf Binarized     & \bf Graded
        & \bf Binarized     & \bf Graded
        \\
       
        \midrule
        \multirow{7}{*}{LambdaMART}
        & \multicolumn{5}{l}{\emph{Trained using clicks}}
        \\
        \cmidrule(lr){2-10}

& No Correction
& 0.692 & 0.691 & 0.689 & 0.689 & 0.400 & 0.402 & 0.396 & 0.397 \\
& AC
& 0.758 & 0.763 & 0.725 & 0.740 & 0.426 & \bfseries 0.477\rlap{$^\dagger$} & 0.337 & 0.431 \\
& \bfseries MBC
& \bfseries 0.760\rlap{$^*$} & \bfseries 0.767\rlap{$^\dagger$} & \bfseries 0.753\rlap{$^*$} & \bfseries 0.748\rlap{$^*$} & \bfseries 0.428 & 0.474 & \bfseries 0.420\rlap{$^*$} & \bfseries 0.454\rlap{$^*$} \\

        \cmidrule(lr){2-10}
        & \multicolumn{5}{l}{\emph{Trained using oracle knowledge}}
        \\
        \cmidrule(lr){2-10}
    
& Ideal-AC
& 0.758 & 0.771 & 0.725 & 0.740 & 0.423 & 0.486 & 0.337 & 0.432 \\
& Rel. Probs
& 0.759 & 0.774 & 0.759 & 0.774 & 0.429 & 0.491 & 0.429 & 0.491 \\
    
        \midrule
        \multirow{7}{*}{DNN}
        & \multicolumn{5}{l}{\emph{Trained using clicks}}
        \\
        \cmidrule(lr){2-10}

& No Correction
& 0.713 & 0.716 & 0.710 & 0.707 & 0.401 & 0.415 & 0.402 & 0.400 \\
& AC
& 0.736 & \bfseries 0.746\rlap{$^*$} & 0.736 & \bfseries 0.746\rlap{$^*$} & 0.404 & \bfseries 0.448\rlap{$^*$} & 0.405 & \bfseries 0.448\rlap{$^*$} \\
& \bfseries MBC
& \bfseries 0.744\rlap{$^*$} & 0.744 & \bfseries 0.741\rlap{$^*$} & 0.740 & \bfseries 0.406 & 0.440 & \bfseries 0.419\rlap{$^*$} & 0.439 \\

        \cmidrule(lr){2-10}
        & \multicolumn{5}{l}{\emph{Trained using oracle knowledge}}
        \\
        \cmidrule(lr){2-10}
    
& Ideal-AC
& 0.742 & 0.749 & 0.737 & 0.746 & 0.409 & 0.451 & 0.405 & 0.446 \\
& Rel. Probs
& 0.743 & 0.750 & 0.743 & 0.750 & 0.416 & 0.453 & 0.416 & 0.453 \\

        \bottomrule
    \end{tabular}
    \end{table*}
\subsection{Ranking performance of \OurMethod{} and \ac{AC}}
\label{sec:results:ranking_performance}
In this section, we try to answer the main research question of this paper:
\begin{enumerate}
    \item[] \emph{How does the~\OurMethod{} method perform compared to \ac{AC} as the state-of-the-art method for correcting position and trust bias?}
\end{enumerate}

\noindent%
In order to determine the effectiveness of methods in debiasing clicks, we measure the ranking performance of a selected \ac{LTR} algorithm trained over the corrected output of~\OurMethod{} and \ac{AC}.
% Therefore, the experiments in each plot of this section, share the same \ac{LTR} algorithm, but the \ac{LTR} is trained with different relevancy labels obtained from different correction methods.

Table~\ref{tab:results:rankingcomp} shows the comparison of~\OurMethod{} and \ac{AC} in terms of NDCG@10.
In this table, we compare the performance for different \ac{LTR} algorithms: LambdaMART and DNN, and different strategies for transforming relevance grades to probabilities: binarized and graded.
% In these tables, two Relevance Labels cases are reported:
% \begin{enumerate*}[label=(\arabic*)]
%     \item {\bfseries Relevance Labels with grades:} This is the performance of LTR trained on the integer relevance grades as provided in the original dataset.
%     This case is independent of the strategies for transforming grades into relevance probabilities.
%     \item {\bfseries Relevance Labels with probabilities:} Since both of the correction methods try to estimate the relevance probability, we report the performance of LTR trained on the actual relevance probabilities, obtained from the original grades.
%     This case depends on the strategy taken for computing relevance probabilities out of the integer grades.
% \end{enumerate*}
% Comparing these two Relevance Labels cases shows that training with probabilities instead of the original grades, reduces the ranking performance, even in the ideal case.

% We report the performance of \ac{rbEM} \ac{AC} at different EM iterations: 30, 100.
%  and 200.
The results show that in most cases~\OurMethod{} performs significantly better than AC.
% It is also interesting that, when the relevance probabilities are computed by the linear transformation, sharp decisions of~\OurMethod{} (see Sec.~\ref{sec:results:insiders}) with LambdaMART outperform the smooth labels of the Relevance Labels case.
% Further analyses are required to find out the exact cause of this behavior, which we leave for future work.
When LambdaMART is used as the \ac{LTR} algorithm,~\OurMethod{} significantly outperforms \ac{AC} on both datasets, both binarized and graded settings and both bias severity cases ($\eta=1$ and $\eta=2$), the only exception being the graded MSLR with normal position bias.
Things are different for the DNN case.
\Ac{AC} significantly outperforms~\OurMethod{} in the graded setting in both datasets.
We also see that, with a single correction method, in some settings LambdaMART performs better than DNN, while in others DNN is the winner.
Considering the \ac{LTR} algorithm as a hyperparameter, i.e. for each correction method in each setting, selecting the \ac{LTR} with the higher ranking performance, we can claim that~\OurMethod{} corrects more effectively than \ac{AC}, as the best performing~\OurMethod{} leads to better results than the best performing \ac{AC}.
For instance, in graded Yahoo! with severe bias, the best performing \ac{AC} is obtained from DNN: $0.746$ vs $0.740$, while the best performing~\OurMethod{} is due to LambdaMART: $0.748$ vs $0.740$.
We see that LambdaMART~\OurMethod{} outperforms DNN \ac{AC} in this case.

% \im{The paragraph below does not seem to be about the comparison between MBC and AC as opposed to the title of this subsection.}
As can be seen in Table~\ref{tab:results:rankingcomp}, there is a gap between \ac{AC} (where the bias parameters are estimated using \ac{rbEM}) and Ideal-\ac{AC} (where the bias parameters are assumed to be known by an oracle).
This shows the dependency of \ac{AC} on the accuracy of the bias parameters.
As a reminder, this dependency is one of our motivations to propose the novel~\OurMethod{} method.

With severe position bias ($\eta=2$), as a result of very low examination probabilities, we observe a noticeable drop in the performance of \ac{AC} and Ideal-\ac{AC} in some cases.
Specifically, for the binarized MSLR with LambdaMART, we observe that \ac{AC} performs even worse than the no correction case.
This observation underlines the need for variance reduction techniques for the \ac{AC} method.
% Another interesting observation is that the gap between~\OurMethod{} and \ac{AC} is noticeably larger in the MSLR-WEB30k dataset compared to the Yahoo! Webscope dataset.
% We hypothesize that one reason for this difference may lie in the accuracy of the relevance estimator.
% \im{What do you mean by the accuracy of a relevance estimator/estimation?}
% Lower NDCGs for MSLR-WEB30k implicitly mean lower accuracies for the relevance estimator in this dataset.
% As discussed earlier, the unbiasedness of \ac{AC} relies on accurate relevance estimation.
% Therefore, compared to MSLR, for the Yahoo! dataset, where the relevance estimation can be done more accurately, the bias parameters are also estimated more precisely by \ac{rbEM} and, thus, the performance of \ac{AC} is higher.
% More evidence is required to confirm or reject this hypothesis.
% We leave further inspections for future work.

% The results of this section confirm the theory provided in Sec.~\ref{sec:method} as the ranking performance of~\OurMethod{} converges to the Relevance Labels case.

%\ali{
\begin{remark}
We see that in some of the binarized cases,~\OurMethod{} performs slightly better than the Rel. Probs case, which may seem counterintuitive.
The reason is that the evaluation is performed on the graded test set for comparability considerations.
As a result, the relevance grades of $\{3,4\}$ as well as $\{0,1,2\}$ are treated the same in the training, but distinguished in evaluation.
This is further observable in comparing the Rel. Probs in the binarized and graded settings: the binarized Rel. Probs is not the theoretical upper bound for the ranking performance, the graded Rel. Probs is.
\end{remark}
%}
% \im{This is also an important remark. Can we refer to some other papers that do the same: use binarized labels for training and graded for evaluation? This would make our case much stronger.}
	% !TEX root = ../2021-clustering-cltr.tex

%################################################
%################################################
%################################################

\begin{table*}[t]
    \caption{NDCG@10 comparison of~\OurMethod{} and \ac{AC} with \ac{PBM} assumption on Yahoo! Webscope and MSLR-WEB30k datasets, when cascade models are used for simulating the clicks.
    % Superscripts $^*$ and $^\dagger$ indicate significance compared to the other correction method with $p<0.01$ and $p<0.1$ respectively.
    }
    \label{tab:results:cmrankingcomp}
    \setlength{\tabcolsep}{0.6em}
    \begin{tabular}{l l cccc| cccc}
    \toprule
        &
        & \multicolumn{4}{c|}{Yahoo! Webscope}
        & \multicolumn{4}{|c}{MSLR-WEB30k}
        \\
        \cmidrule(lr){3-6}\cmidrule(lr){7-10} 
        &
        & \multicolumn{2}{c}{DCM} & \multicolumn{2}{c|}{UBM}
        & \multicolumn{2}{|c}{DCM} & \multicolumn{2}{c}{UBM}
        \\
        \cmidrule(lr){3-4}\cmidrule(lr){5-6}\cmidrule(lr){7-8}\cmidrule(lr){9-10} 
        &
        & \bf Binarized           & \bf Graded          
        & \bf Binarized           & \bf Graded
        & \bf Binarized           & \bf Graded          
        & \bf Binarized           & \bf Graded
        \\
    
        \midrule
        \multirow{4}{*}{LambdaMART}
        % & \multicolumn{5}{l}{\emph{Trained using clicks}}
        % \\
        % \cmidrule(lr){2-10}

& No Correction
& 0.694 & 0.691 & 0.691 & 0.690 & 0.405 & 0.402 & 0.397 & 0.400 \\
& AC
& 0.739 & 0.753 & 0.753 & 0.759 & \bfseries 0.411\rlap{$^\dagger$} & 0.469 & 0.387 & \bfseries 0.474\rlap{$^*$} \\
& \bfseries MBC
& \bfseries 0.752\rlap{$^*$} & \bfseries 0.760\rlap{$^*$} & \bfseries 0.756\rlap{$^*$} & \bfseries 0.763\rlap{$^\dagger$} & 0.407 & \bfseries 0.477\rlap{$^*$} & \bfseries 0.410\rlap{$^*$} & 0.469 \\

        \cmidrule(lr){2-10}
        % & \multicolumn{5}{l}{\emph{Trained using oracle knowledge}}
        % \\
        \cmidrule(lr){2-10}
    
& Rel. Probs
& 0.756 & 0.770 & 0.756 & 0.770 & 0.409 & 0.485 & 0.409 & 0.485 \\
    
        \midrule
        \multirow{4}{*}{DNN}
        % & \multicolumn{5}{l}{\emph{Trained using clicks}}
        % \\
        % \cmidrule(lr){2-10}

& No Correction
& 0.709 & 0.712 & 0.709 & 0.716 & 0.407 & 0.407 & 0.391 & 0.406 \\
& AC
& 0.728 & 0.733 & 0.736 & \bfseries 0.743 & 0.364 & 0.433 & 0.402 & \bfseries 0.445\rlap{$^*$} \\
& \bfseries MBC
& \bfseries 0.741\rlap{$^*$} & \bfseries 0.743\rlap{$^*$} & \bfseries 0.741\rlap{$^*$} & 0.742 & \bfseries 0.408\rlap{$^*$} & \bfseries 0.446\rlap{$^*$} & \bfseries 0.410\rlap{$^\dagger$} & 0.441 \\

        \cmidrule(lr){2-10}
        % & \multicolumn{5}{l}{\emph{Trained using oracle knowledge}}
        % \\
        \cmidrule(lr){2-10}
    
& Rel. Probs
& 0.743 & 0.749 & 0.743 & 0.749 & 0.412 & 0.453 & 0.412 & 0.453 \\

        \bottomrule
    \end{tabular}
    \end{table*}
\subsection{Click model mismatch}
\label{sec:results:cmranking_performance}
% In Sec.~\ref{sec:results:ranking_performance}, we have shown that when the clicks adhere to the same click model as is assumed in the correction methods,~\OurMethod{} performs better than \ac{AC}, while both performing close to the full information case.
Next, we investigate the effect of a mismatch between a click model that generates clicks and a click model that is used for debiasing:
\begin{enumerate}
    \item[] \emph{How do~\OurMethod{} and \ac{AC} methods perform when \ac{PBM} is assumed for debiasing, whereas user clicks adhere to a different click model?}
\end{enumerate}

\noindent%
This is an important question as in reality none of the existing click models can fit user clicks perfectly and there is always a mismatch between a click model that user clicks adhere to and the one that is assumed for debiasing.

In this section, we want to examine the robustness of different correction methods in terms of their assumed click model.
Specifically, we simulate clicks based on two well-known models: \ac{DCM}~\citep{DCM2009}, that is a cascade-based click model, and \ac{UBM}~\citep{UBM2008}, that has features of both position-based and cascade-based models.
In order to have realistic experiments, we learn the parameters of these click models using the Yandex dataset,\footnote{\url{https://www.kaggle.com/c/yandex-personalized-web-search-challenge}} which contains a large amount of clicks from a production search engine, and the PyClick library.\footnote{\url{https://github.com/markovi/PyClick}}
Since the Yandex dataset has the top-$10$ results, the parameters are obtained for the top-$10$ and our experiments in this section are reported for the top-$10$ setting.\footnote{Hence, the Rel.~Probs results in this section may be different from those in Table~\ref{tab:results:rankingcomp}, where the top-$20$ is used instead.}
Then, we use the learned parameters to simulate clicks similarly to Sec.~\ref{sec:experiment:clicksimulation}, this time using \ac{DCM} and \ac{UBM} instead of \ac{PBM}.

For each case, we debias clicks using~\OurMethod{} and \ac{AC} with the \ac{PBM} click model.
Table~\ref{tab:results:cmrankingcomp} shows the ranking performance of these correction methods.

We observe that~\OurMethod{} always improves over the no correction case, while \ac{AC} fails to do so in some cases: Binarized UBM with LambdaMART and Binarized DCM with DNN.
Furthermore, in most cases, there is a gap between the corrections and the Rel.~Probs performance.
This gap is due to the mismatch between the actual and assumed click models.

Comparing~\OurMethod{} and \ac{AC}, we see that~\OurMethod{} outperforms \ac{AC} in most cases, suggesting that our~\OurMethod{} method is more robust to the click model mismatch.
This observation coincides with our expectation, since~\OurMethod{} does not rely on the parameters of click models: as long as the click probabilities over relevant and non-relevant items are separable for each position,~\OurMethod{} with \ac{PBM} works fine.
In other words, unlike \ac{AC} where the examination probability of each position is estimated by a single value, in~\OurMethod{} different items at the same position are allowed to have different examination probabilities (see Remark~\ref{rem:almostthesame}).

To further illustrate the difference, we give a toy example.
Suppose there are two sets of sessions $S_1$ and $S_2$ such that the true (hidden) examination probability of an item at position $4$ is $0.8$ for sessions in $S_1$ and $0.4$ for sessions in $S_2$ respectively.
Using Eq.~\ref{eq:epsilon}, the probabilities of click at position $4$ on a relevant item are $0.768$ and $0.384$ and on a non-relevant item are $0.13$ and $0.065$ for sessions in $S_1$ and $S_2$ respectively.
The best \ac{AC} can do is to estimate the examination probability by the average
% \footnote{Weighted average, to be precise} 
of $\frac{0.8 + 0.4}{2} = 0.6$, leading to inaccurate relevant probabilities for items in both sets of sessions.
\OurMethod{}, instead, separates the \ac{CTR}s obtained from sessions.
Given enough sessions from each set, a clustering method similar to what we use in~\OurMethod{}, can distinguish between relevant \ac{CTR}s with mean $\in\{0.768, 0.384\}$ and non-relevant \ac{CTR}s with mean $\in\{0.13, 0.065\}$.
In other words, as long as the minimum relevant click probability is greater than the maximum non-relevant click probability,~\OurMethod{} manages to separate the two distributions.
	% !TEX root = ../2021-clustering-cltr.tex

%################################################
%################################################
%################################################
\subsection{Efficiency of~\OurMethod{}}
\label{sec:result:efficiency}
%In terms of the running time,~\OurMethod{} is orders of magnitude faster than \ac{rbEM} AC.
%The reason is simple: at each iteration of the \ac{rbEM}, a regression function has to be fitted to the relevance probabilities.

We measured the running time of~\OurMethod{} and \ac{AC} on multiple cores of Intel(R) Xeon(R) Gold 5118 CPU @2.30GHz.
Here, we only report the time required for \emph{correcting} clicks, since the LTR part is the same in both~\OurMethod{} and \ac{AC}.
\OurMethod{}, requires around $110$ seconds to estimate the mixture distributions and correct clicks.
Each iteration of the \ac{rbEM} parameter estimation for \ac{AC} requires $370$ seconds for fitting the regression function and around $50$ additional seconds to update the bias parameters and target relevance probabilities.
The fact that at least 30 iterations are required to get a decent performance of \ac{AC} (see App.~\ref{sec:results:eminstability}) means that \ac{AC} requires a minimum of $(370+50) \cdot 30 = 12600$ seconds to correct clicks.
This means that~\OurMethod{} runs approximately $114$ times faster than \ac{AC}.

Of course, the choice of the regression function plays an important role in the above computations.
However, it is worth noting that even with a magical zero-time regression function, \ac{AC} would still require around $50 \cdot 30 = 1500$ seconds for updating the bias parameters and target relevance probabilities.
This hypothetical setting gives a lower bound of around $13$ times for the efficiency superiority of our~\OurMethod{} method over \ac{AC}.

	% \input{sections/05-results-severe}
	% !TEX root = ../2021-clustering-cltr.tex

%################################################
%################################################
%################################################
\subsection{\OurMethod{} with different mixture distributions}
\label{sec:results:mixture}
\OurMethod{} relies on mixture models for correcting the clicks.
In this section, we will address this question:
\begin{enumerate}
    \item[] \emph{How do different assumptions for the distribution model of mixture components affect the correction quality?}
\end{enumerate}

\noindent%
% Similar to previous experiments, we measure the correction quality by training a \ac{LTR} algorithm on top of the corrected values and comparing the ranking performance of the \ac{LTR} when used with different correction methods.
Particularly, to compare different distribution shapes, we execute two variations of~\OurMethod:
\begin{enumerate}[leftmargin=*,nosep]
    \item {\bfseries \OurMethod{} (Gaussian):} 
    % A Gaussian mixture model is fitted on the \acp{CTR} of each position.
    A Gaussian (normal) distribution is usually the default choice for modeling real world data, and due to the \acl{CLT} it is usually a safe choice.
    \item {\bfseries \OurMethod{} (Binomial):} 
    % A Binomial mixture model is fitted on the \acp{CTR} of each position.
    We include a Binomial distribution test, since the clicks are usually considered to have a Bernoulli distribution which makes \ac{CTR}, 
    % the normalized aggregation of clicks, 
    follow a Binomial distribution.
\end{enumerate}

\setlength{\tabcolsep}{0.15em}

{\renewcommand{\arraystretch}{1}
\begin{figure}[t]
\centering
\begin{tabular}{l l c}
    % & \small LambdaMART
    % & \small DNN
    % & \small c) top-$20$, LambdaMART
    % & \small d) top-$20$, DNN
    \\
    & \rotatebox[origin=lt]{90}{\hspace{1.75em} \small nDCG@10}
    & \includegraphics[scale=0.4]{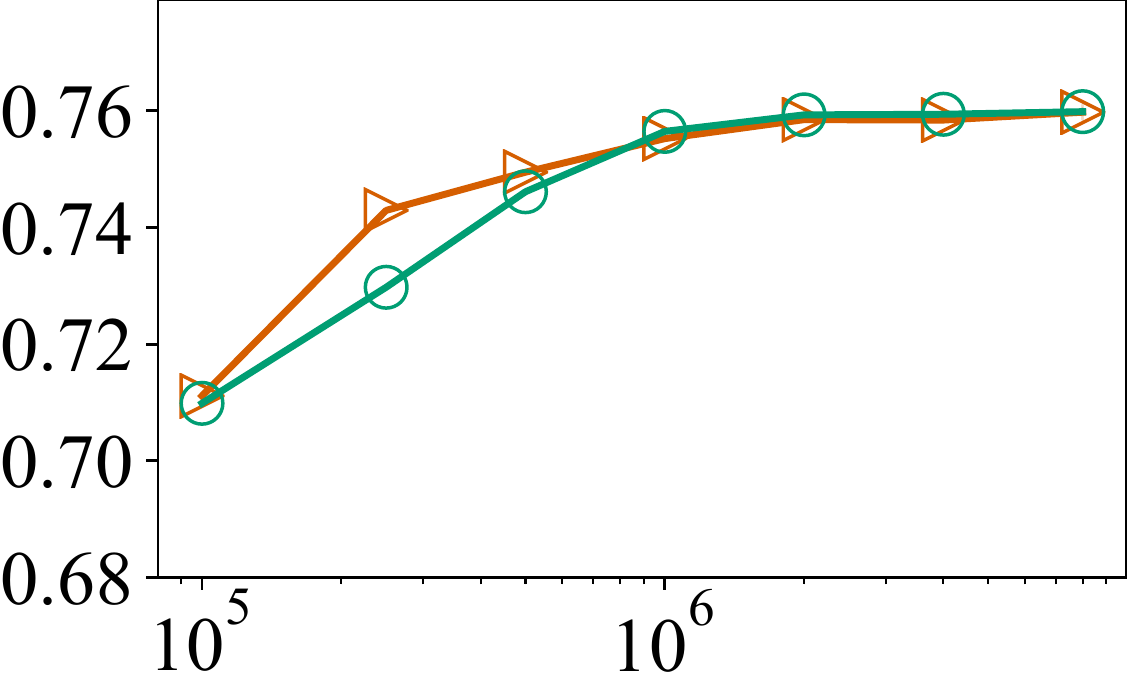}
    % & \includegraphics[scale=\performanceplotscale]{sections/figure/mixture_trust_1_top10_DNN}
    % & \includegraphics[scale=\performanceplotscale]{sections/figure/mixture_trust_1_top20_lambdamart}
    % & \includegraphics[scale=\performanceplotscale]{sections/figure/mixture_trust_1_top20_DNN}
    % \\
    % \rotatebox[origin=lt]{90}{\hspace{2.25em} \small nDCG@10}
    % & \includegraphics[scale=\performanceplotscale]{sections/figure/mixture_trust_2_top10_lambdamart}
    % & \includegraphics[scale=\performanceplotscale]{sections/figure/mixture_trust_2_top10_DNN}
    % & \includegraphics[scale=\performanceplotscale]{sections/figure/mixture_trust_2_top20_lambdamart}
    % & \includegraphics[scale=\performanceplotscale]{sections/figure/mixture_trust_2_top20_DNN}
\\ 
&
& \small Number of Training Clicks
% & \small Number of Training Clicks
% & \small Number of Training Clicks
% & \small Number of Training Clicks
\\ 
 \multicolumn{3}{c}{
 \includegraphics[scale=0.4]{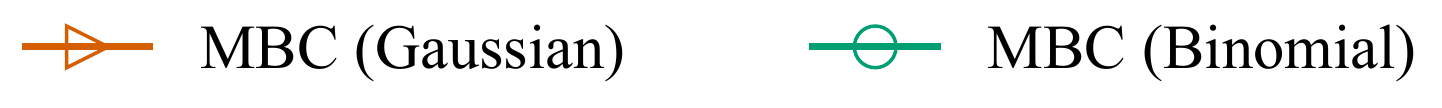}
} 
\end{tabular}
% \vspace{-0.5\baselineskip}
\caption{
    Ranking comparison of different mixture distributions for~\OurMethod{} in term of nDCG@10 with respect to different numbers of clicks on Yahoo! Webscope dataset.
    % Top row: normal position bias ($\eta=1$); Bottom row: severe position bias ($\eta=2$)
}
\label{fig:results:mixture}
\end{figure}
}

% \begin{figure*}[t]
%     \centering
%     \begin{tabular}{l c c c}
%     & \small Position $3$
%     & \small Position $5$
%     & \small Position $9$
%     \\
%     \rotatebox[origin=lt]{90}{\hspace{.5em} \small Corrected relevance label}
%     & \includegraphics[scale=0.35]{sections/figure/250000_trust_1_top20_pos2}
%     & \includegraphics[scale=0.35]{sections/figure/250000_trust_1_top20_pos4}
%     & \includegraphics[scale=0.35]{sections/figure/250000_trust_1_top20_pos8}
%     \\
%     & \small CTR
%     & \small CTR
%     & \small CTR
%     \\ 
%     % 
%      \multicolumn{4}{c}{
%      \includegraphics[scale=.37]{sections/figure/mixture_insider_look_legend}
%     } 
%     %
%     \end{tabular}
%     % \vspace{-0.5\baselineskip}
%     \caption{
%     Correction comparison of Gaussian and Binomial~\OurMethod{} in inferring the relevance label from the observed CTR with 250K clicks.
%     }
%     \label{fig:results:mixture_insider_look}
%     \end{figure*}

\noindent%
% In Sec.~\ref{sec:method} we proved that given enough sessions per query, our~\OurMethod{} method gives an unbiased estimation of relevance, irrespective of the assumed distribution model.
% However, we did not discuss the number of sessions that is \emph{large enough} for an unbiased estimation.
% It turns out that the number of enough sessions can depend on the choice of the distribution model.
Fig.~\ref{fig:results:mixture} shows the effect of the assumed distribution shape on the performance of~\OurMethod.
These experiments coincide with the theory provided in Sec.~\ref{sec:method}. 
% in showing that after enough training click data, both variations of~\OurMethod{} converge to the same performance in different \ac{LTR} algorithms.
However, we observe that the Gaussian model converges faster than Binomial model. 
We hypothesize that, since the Binomial model is less generalizable than the Gaussian model, it cannot recover the signal in the presence of high levels of noise in the low data regime.

\section{Conclusion}
We have proposed a new correction method for position and trust bias, to be used in \ac{CLTR}, and we have proven its unbiasedness.
Our method, \acl{MBC}, assumes that the distribution of \acp{CTR} of different items is a mixture of two distributions: relevant and non-relevant.
Once this mixture is estimated, the relevant items are easily identified and can be used to train a \ac{LTR} algorithm.
Consequently, correction and learning to rank are two separate phases in our~\OurMethod{} method.
This breaks the cyclic dependency between bias parameter estimation and relevance inference in existing correction methods.
Unlike those methods, in which the unbiasedness relies on accurate bias parameters estimation, the unbiasedness proof of~\OurMethod{} does not rely on knowledge of relevance.
This solves some of the practical limitations of the existing methods for correcting position and trust bias which depend on the bias parameter estimation and usually use the \ac{rbEM} for that.

Particularly, we have found that the cyclic dependency in the existing methods, leads to at least three practical limitations:
\begin{enumerate*}
    \item Severe sensitivity to the choice of the regression function;
    \item EM not necessarily converging towards the zero gradient; and,
    \item Low efficiency due to repeated use of the regression function.
\end{enumerate*}
\OurMethod{} is a new approach that solves all of these limitations as a side benefit.

We have performed extensive semi-synthetic experiments to analyze the strength of~\OurMethod{} at correcting click bias.
Our experiments show that~\OurMethod{} outperforms \ac{AC}, the state-of-the-art correction method for position and trust bias, in most of the settings.
% We tested the inferred relevance of~\OurMethod{} as well as \ac{AC} with two strong \ac{LTR} algorithms: LambdaMART and DNN.
% We have observed that the best performing~\OurMethod{} (in terms of \ac{LTR}) is \emph{always} better than the best performing \ac{AC} in both of our tested datasets.
Furthermore, we have provided evidence that~\OurMethod{} is more robust to the click model mismatch, compared to \ac{AC}.

There are several directions that can be followed for the future work:
\begin{enumerate*}
    \item Using non-parametric approaches for estimating the mixtures. 
    % We tried parametric approaches that require us to assume a distribution for the mixtures. 
    % It would be interesting to find how other approaches perform in different situations.
    \item Testing on more complex click models than \ac{PBM}, as~\OurMethod{} is easier to extend to more complex click models than \ac{AC}.
    % \item Improving the performance in low data regime.
    % Comparing to \ac{AC} with \ac{rbEM},~\OurMethod{} performs poorly in low data regime.
    % We use parametric mixture model with standard \ac{EM}, which has low accuracy when the data is very low.
    % It is not clear if a better estimation approach for low data is available or not.
\end{enumerate*}

\section*{Code and data}
To facilitate the reproducibility of the reported results, this work only made use of publicly available data and our experimental implementation is publicly available at \url{https://github.com/AliVard/MBC}.

\begin{acks}
This research was supported by Elsevier and the Netherlands Organisation for Scientific Research (NWO)
under pro\-ject nr
612.\-001.\-551.
All content represents the opinion of the authors, which is not necessarily shared or endorsed by their respective employers and/or sponsors.
\end{acks}

\appendix
% !TEX root = ../2021-clustering-cltr.tex

\appendix
\section{Analysis of regression-based EM}
\label{app:em}
In this Appendix, we list and discuss about three specific practical limitations of \ac{rbEM} with \ac{IPS} and \ac{AC}.
As stated earlier, in \ac{AC} and \ac{IPS} there is an inherent cyclic dependency between relevance and bias parameters which leads to the use of iterative algorithms like \ac{EM}.
All of the limitations we list here come from the fact that standard \ac{EM} cannot be used to infer the bias parameters and a regression function has to be used in the middle.
Needless to say, these practical limitations do not apply to~\OurMethod.

\subsection{Practical limitations of \ac{rbEM} for \ac{AC}}
\label{app:limitations}
\paragraph{Sensitivity to the choice of regression function}
In \ac{rbEM} with \ac{AC}, the regression function is responsible for providing the relevance probabilities in the M-step.
This means that the \ac{EM} is no longer using the values obtained through the likelihood maximization, but an estimate of them obtained from the regression function.
Consequently, the performance of \ac{EM} greatly depends on the performance of the underlying regression function.
We show empirically that different choices for the regression function lead to different performances in the ranking of the final unbiased \ac{LTR} algorithm.
% In our experiments, we test \ac{rbEM} with two \emph{strong} regression functions: LambdaMART and neural network.
% We also test different last layer activations to transform the logits into probabilities (since the regressor is expected to output the relevance probabilities): sigmoid and soft-min-max~\citep{vardasbi2020when}.
% This gives us four different settings for \ac{rbEM} to compare.
% We use the learned bias parameters with the \ac{AC} method and evaluate its ranking performance in each of the above settings.
% As can be seen in Appendix~\ref{sec:results:eminstability}, the ranking performance of the final unbiased \ac{LTR} algorithm is very sensitive to the choice of the regression function as well as the last layer activation.
% Even the ordering of these settings (based on their ranking performance) is completely different across different datasets and different settings.

\paragraph{Not necessarily converging to zero gradient}
Unlike standard EM, \ac{rbEM} is not guaranteed to move in the direction of zero gradient.
The reason is simple: in the M-step of \ac{rbEM}, the relevance probabilities that maximize the likelihood function are replaced with the outputs of the regression function, in favor of addressing the otherwise unavoidable sparsity issue.
Therefore, the convergence proof of the standard EM, no longer holds for \ac{rbEM}.
% Our experiments reveal some problems in \ac{rbEM} in this regard (see Appendix~\ref{sec:results:eminstability}).
% First, we observe that when there is low data ($250k$ clicks) or even moderately high data ($1M$), the ranking performance has a high variance behavior.
% But more importantly, we detect a sudden performance drop at high iterations for some choices of the regression function with low or moderate amount of training data ($1M$ clicks).
Our observations suggest that when using the \ac{rbEM}, more iterations does not necessarily mean better performance, as opposed to the standard EM.

\paragraph{Low efficiency}
The \ac{rbEM}, by design, requires a regression function to be fitted to the relevance probabilities at each maximization step of the EM algorithm.
We have discussed this issue in Sec.~\ref{sec:result:efficiency}.
% Simple regression functions can be trained fast, but may delay the EM convergence or may cause the EM to never converge to the accurate values at all (See previous paragraphs).
% More complex regression functions perform better at each iteration, but need more time to be trained.
% It is not clear where the optimum point in this trade-off is.
% Our experiments in Sec.~\ref{sec:result:efficiency} show that in order for the \ac{rbEM} AC to converge to the same ranking performance as~\OurMethod{}, it requires orders of magnitude more running time compared to~\OurMethod{} on the same hardware.

\subsection{Instability of \ac{rbEM} for \ac{AC}}
\label{sec:results:eminstability}
% In the experiments in Sec.~\ref{sec:results}, we compared our~\OurMethod{} method with \ac{AC} in terms of efficiency and ranking effectiveness.
In practice, \ac{rbEM} is used  to estimate the bias parameters for \ac{AC}. In this set of experiments, we address the following two questions about the performance of \ac{rbEM} for \ac{AC}:
\begin{enumerate}[leftmargin=*,nosep]
    \item \emph{What is the impact of the choice of regression function for \ac{rbEM} on the ranking performance of \ac{AC}?}
    \item \emph{How does the ranking performance of \ac{rbEM} \ac{AC} vary as a function of the number of iterations?}
\end{enumerate}

\noindent%
Neither of the above questions concern the standard \ac{EM}, as discussed before.
% : there is no regression function, acting as a feature-based proxy between the expectation and maximization steps; and, it is proved to move in the direction of a zero gradient, i.e., non-decreasing performance in terms of iterations.
% However, as explained in Sec.~\ref{sec:AC}, standard \ac{EM} will not work in all situations.
Introducing a regression function in an \ac{EM} algorithm is a powerful idea to solve the issues of standard \ac{EM} in \ac{CLTR}, but it also brings its concerns as well.

% We try to answer the two questions of this section by analyzing the ranking performance of \ac{AC} with \ac{rbEM}.
We try different regression functions with different loss functions and report the ranking performance on different iterations of \ac{rbEM}.
We use the following regression functions:
\begin{enumerate*}
    \item LambdaMART; and
    \item a neural network similar to that of~\citep{vardasbi2020when}.
\end{enumerate*}
Since the regression function is used for fitting to relevance \emph{probabilities}, we use the cross entropy loss.
Following the literature, we test two cross entropy variations:
\begin{enumerate*}
    \item Sigmoid cross entropy, similar to~\citep{wang2018position,agarwal2019addressing};
    \item Soft-min-max cross entropy, similar to~\citep{vardasbi2020when}.
\end{enumerate*}

% Before proceeding with the results, please note that the evaluations of this section are performed on the full information test set, i.e., true relevance labels without any top-$m$ cutting for the \emph{test set}.
% Since \ac{AC} is used as a baseline in this paper, we exploited our knowledge of the test set to analyze its behavior and compare our method against the strongest possible baseline.
% In practice, choosing the best performing parameters for the regression-based \ac{EM} may not be as easy.

\setlength{\tabcolsep}{0.05em}

{\renewcommand{\arraystretch}{0.2}
\begin{figure}[t]
\centering
\begin{tabular}{l c c}
& \hspace{1em}\small Yahoo! Webscope
& \hspace{1em}\small MSLR-WEB30k 
\\
\rotatebox[origin=lt]{90}{\hspace{1.5em} \small nDCG@10}
& 
\includegraphics[scale=0.35]{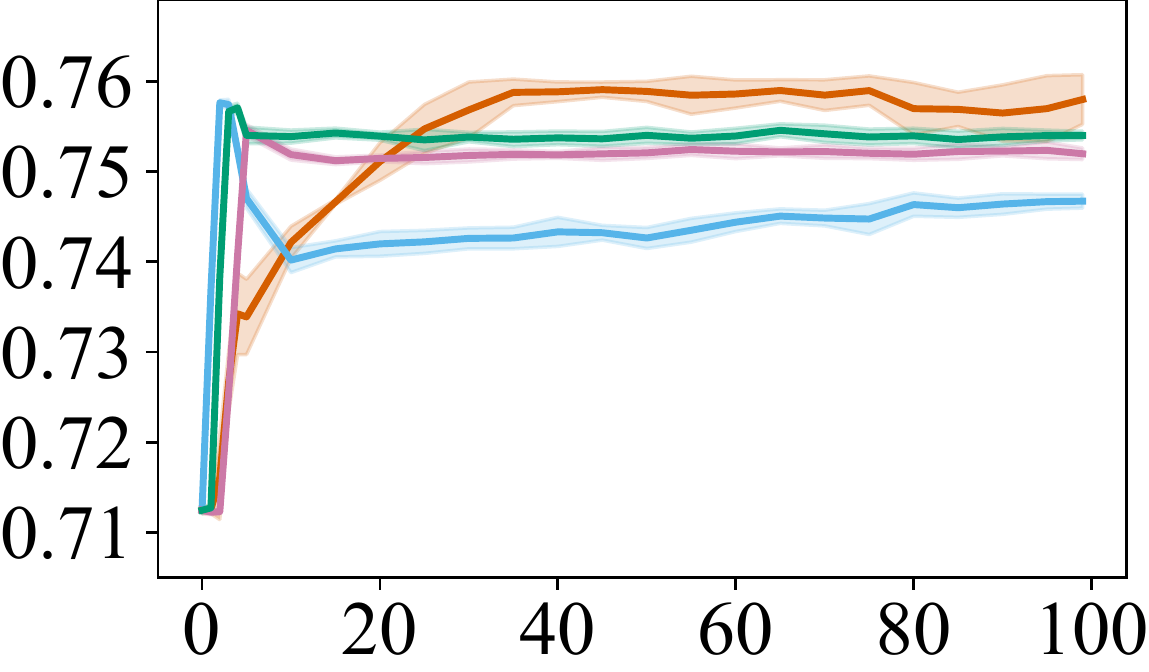} 
&
\includegraphics[scale=0.35]{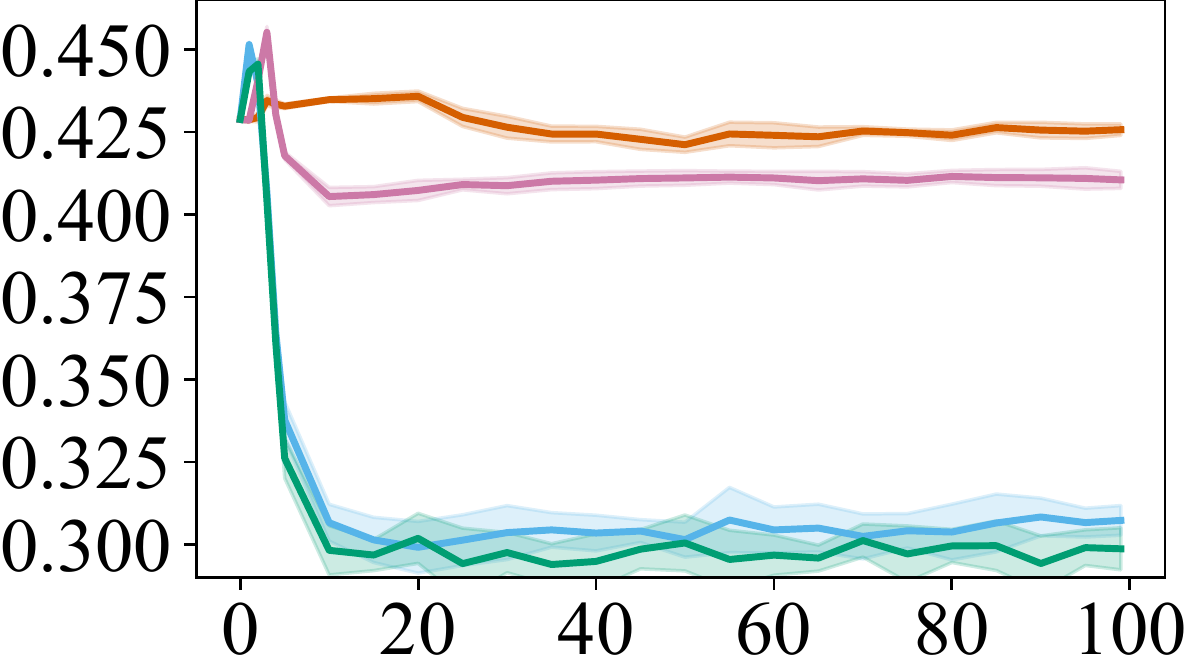}
\\
& \hspace{1em}\small EM iteration
& \hspace{1em}\small EM iteration 
\\ 
 \multicolumn{3}{c}{
 \includegraphics[scale=0.4]{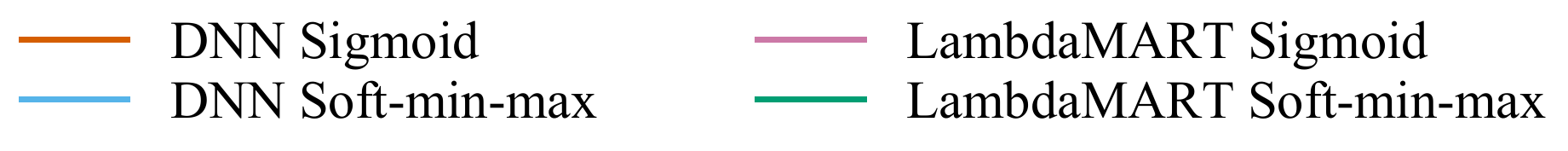}
} 
\end{tabular}
% \vspace{-0.5\baselineskip}
\caption{
Regression-based EM ranking performance with different regression functions with respect to EM iterations.
%LambdaMART is used as the LTR algorithm.
Left: Yahoo! Webscope dataset; right: MSLR-WEB30k dataset.
}
\label{fig:results:em_iterations}
\end{figure}

% \begin{figure*}[t]
%     \centering
%     \begin{tabular}{l c c c}
%     & \small $0.25$M clicks
%     & \small $1$M clicks
%     & \small $8$M clicks
%     \\
%     \rotatebox[origin=lt]{90}{\hspace{1.5em} \small nDCG@10}
%     & \includegraphics[scale=\iterationsplotscale]{sections/figure/mslr_250000_trust_1_top10}
%     & \includegraphics[scale=\iterationsplotscale]{sections/figure/mslr_1000000_trust_1_top10}
%     & \includegraphics[scale=\iterationsplotscale]{sections/figure/mslr_8000000_trust_1_top10}
%     \\
%     \rotatebox[origin=lt]{90}{\hspace{1.5em} \small nDCG@10}
%     & \includegraphics[scale=\iterationsplotscale]{sections/figure/mslr_250000_trust_1_top20}
%     & \includegraphics[scale=\iterationsplotscale]{sections/figure/mslr_1000000_trust_1_top20}
%     & \includegraphics[scale=\iterationsplotscale]{sections/figure/mslr_8000000_trust_1_top20}
%     \\
%     & \small EM iteration
%     & \small EM iteration
%     & \small EM iteration
%     \\ 
%     % 
%      \multicolumn{4}{c}{
%      \includegraphics[scale=\iterationslegendscale]{sections/figure/EM_iterations_legend}
%     } 
%     %
%     \end{tabular}
%     % \vspace{-0.5\baselineskip}
%     \caption{
%     Regression-based EM ranking performance in severe position bias with different regression functions with respect to EM iterations.
%     LambdaMART is used as the LTR algorithm in all of the plots. Top row: top-$10$; Bottom row: top-$20$
%     }
%     \label{fig:results:em_iterations}
% \end{figure*}
}

Fig.~\ref{fig:results:em_iterations} summarizes the ranking performance of \ac{AC} with \ac{rbEM}, with different regression functions, as a function of \ac{EM} iterations.
Based on the observations in this figure, the answers to the questions of this section are: \emph{big} and \emph{a lot}.
Concerning the first question, we see that different regression functions lead to large differences in ranking performance.
More interestingly, the ordering of the regression functions is not preserved in different datasets.
% LambdaMART with sigmoid cross entropy loss is the third and second performing ranker in Yahoo! Webscope and MSLR-WEB30k datasets, respectively.
% These observations indicate that the choice for the regression function and its loss function has a significant impact on the ranking performance of the \ac{rbEM} algorithm.

The second question is about the change of the performance as the number of EM iterations increases.
Fig.~\ref{fig:results:em_iterations} shows that the ranking performance does not always improve with more EM iterations.
On the Yahoo! Webscope dataset, the DNN with sigmoid loss has a slight performance drop at iteration 80.
On the MSLR-WEB30k dataset, the performance of DNN with sigmoid loss is decreasing between iterations 20 and 50.
Another observation relating to the EM iterations are the sudden performance drops at some iterations: DNN with Soft-min-max loss in both datasets.
% We investigated these anomalies and found that they are caused by a saturated regression function.
% To elaborate, we observed that the sudden drops occur when the regression function estimates the relevance probability of almost all the items to be $1$.
These observations indicate that, unlike the standard EM, the rbEM cannot necessarily be trusted with regard to iterations:
The performance is not always increasing with the number of iterations.
% , and more importantly, sometimes anomalies in the form of sudden performance drops occur at some iterations.

Based on the above discussions and according to Fig.~\ref{fig:results:em_iterations}, our choice for the rbEM baseline for comparison with other methods is as follows:
We chose the DNN with sigmoid cross entropy loss function as the regression function.
When there are no anomalies, the DNN Sigmoid performs well up until iteration 100.
In the cases where there is an anomaly, we use the results of the last iteration before the anomaly.

\bibliographystyle{ACM-Reference-Format}
\bibliography{references}

\end{document}